\documentclass[journal,twocolumn]{IEEEtran}
%\documentclass[11pt,onecolumn]{article}
%\textheight=23cm \textwidth=17cm \topmargin=-0.1cm
%\oddsidemargin=-0.5cm
%%%%%%%%%%%%%%%%%%%%%%%%%%%%%%%%%%%%%
\usepackage{amsfonts}    % Enable this line and disable the
\usepackage{color}
\usepackage{graphicx}
\usepackage[dvips]{epsfig}
\usepackage{graphics}
\usepackage{cite}
\usepackage{arydshln}
\usepackage{amsthm}
\usepackage{amsmath} % assumes amsmath package installed
\usepackage{amssymb}  % assumes amsmath package installed
\usepackage[boxed,ruled,lined]{algorithm2e}
\usepackage{color}

%%%%%%%%%%%%%%%%%%%%%%%%%%%%%%%%

\def\twon #1{\|#1\|}
\def\rainfty{\rightarrow\infty}
\def\ra{\rightarrow}

\def\argmin{\text{argmin}}

\def\bE{\mathbb{E}}

\def\bN{\mathbb{N}}

\def\bP{\mathbb{P}}

\def\bR{\mathbb{R}}

\def\cE{\mathcal{E}}
\def\cF{\mathcal{F}}
\def\cG{\mathcal{G}}
\def\cH{\mathcal{H}}

\def\cL{\mathcal{L}}

\def\cR{\mathcal{N}}

\def\cP{\mathcal{P}}

\def\cR{\mathcal{R}}
\def\cS{\mathcal{S}}

\def\cV{\mathcal{V}}

\def\cY{\mathcal{Y}}

\def \qed {\hfill \vrule height6pt width 6pt depth 0pt}
\def\bee{\begin{equation}}
\def\ene{\end{equation}}
\def\beq{\begin{eqnarray}}
\def\enq{\end{eqnarray}}

\newtheorem{assum}{Assumption}
\newtheorem{lem}{Lemma}
\newtheorem{rem}{Remark}
\newtheorem{cor}{Corollary}
\newtheorem{thm}{Theorem}
\newtheorem{exmp}{Example}

\newtheorem{defi}{Definition}

\def\bone{{\bf 1}}

%%%%%%%%%%%%%%%%%%%%%

\begin{document}
\title{Distributed Algorithms for Computation of Centrality Measures in Complex Networks}
\author{Keyou You, Roberto Tempo, and Li Qiu \thanks{This work was supported by the National Natural Science Foundation of China (61304038),  Tsinghua University Initiative Scientific Research Program, CNR International Joint Lab COOPS and Hong Kong Research Grants Council (618511).} \thanks{Keyou You is with the Department of Automation and TNList, Tsinghua University, 100084, China (email: youky@tsinghua.edu.cn).} \thanks{Roberto Tempo is with CNR-IEIIT, Politecnico di Torino, Torino, 10129, Italy (email: roberto.tempo@polito.it).} \thanks{Li Qiu is the Department of Electronic and Computer Engineering, The Hong Kong University of Science and Technology, Hong Kong, China (email: eeqiu@ust.hk).}}

\maketitle
\begin{abstract}
This paper is concerned with distributed computation of several commonly used centrality
measures in complex networks. In particular, we propose deterministic algorithms,
which converge in finite time, for the distributed computation of the degree, closeness and betweenness centrality measures in directed graphs. Regarding
eigenvector centrality, we consider the PageRank problem as its
typical variant, and design distributed randomized algorithms to compute
PageRank for both fixed and time-varying graphs. A key feature of the proposed
algorithms is that they do not require to know the network size, which can be simultaneously estimated at every node, and that they are clock-free.
To address the PageRank problem of time-varying graphs, we introduce the novel concept of persistent graph, which eliminates the effect of spamming nodes. Moreover, we prove that these algorithms converge almost surely and in the sense of $L^p$. Finally, the effectiveness of the proposed algorithms is illustrated via extensive simulations using a classical benchmark.
\end{abstract}
\markboth{}%
{Shell \MakeLowercase{\textit{et al.}}: Bare Demo of IEEEtran.cls
for Journals}
\begin{IEEEkeywords}  Complex networks, centrality measures, distributed computation, randomized algorithms, convergence properties.\end{IEEEkeywords}

\section{Introduction}

Centrality measures refer to indicators which identify the importance of nodes in a complex network. As first developed in social networks,  many of them were introduced to reflect their sociological origin \cite{newman2010networks}. Nowadays, they have become an important tool in network analysis, and are widely used for ranking the personal influence in a social network, the webpage popularity in the Internet, the fast spread of epidemic diseases, and the key infrastructure in urban networks. In fact, the ranking of a large number of objects is one of the most topical problems in information systems. Depending on the specific application, different centrality measures may be of interest. In this work, we are interested in the commonly used degree \cite{wasserman1994social}, closeness \cite{bavelas1950communication}, betweenness \cite{freeman1977set} and eigenvector \cite{bonacich1972factoring} centralities in complex networks.

As the network size becomes increasingly large, it is usually very difficult to compute centrality measures, except for the degree centrality which is of limited use. To address this issue, it is of great importance to design {\it distributed algorithms} with good scalability properties for their computation, where each node evaluates centralities by only using {\it local} interactions. Although distributed algorithms may play a significant role in alleviating the computational burden,  the access to limited information renders it challenging to ensure that each node provides its exact centrality.
\textcolor[rgb]{0,0,1}{This requires a rigorous and challenging analysis regarding convergence properties of these algorithms, in particular for the PageRank computation.}

\textcolor[rgb]{0,0,1}{We recall that other communities, such as
%he study of network centralities is not a new topic, and bears a long history of the literature in
sociology, biology, physics, applied mathematics and computer science, see \cite{Ercsey,segarra,bonacich1987power,newman2010networks,easley2010networks,boldi2014axioms,friedkin1991theoretical,borgatti2005centrality,langville2011google}
and references therein, focused their attention on network centrality. In particular, within computer science, several authors studied the topic of distributed computing from the networking viewpoint \cite{SI}, which however provides a different viewpoint than that presented in our paper.}
We also remark that the computation of the degree, closeness and  betweenness centralities are closely related. For instance, calculating the betweenness and closeness centralities in a network requires calculating the shortest paths between all pairs of vertices. While the degree centrality is trivial, the computations of betweenness and eigenvector centralities have been extensively studied.  Numerous algorithms have been designed to compute the betweenness centrality, including Floyd-Warshall algorithm \cite{weisstein2008floyd}, Johnson's algorithm \cite{cormen2001introduction} and Brandes' algorithm \cite{brandes2001faster}.  On unweighted graphs with loops and multiple edges, calculating the betweenness centrality takes $O(|\cV|\cdot |\cE|)$ time using the classical Brandes' algorithm, where $|\cV|$ and $|\cE|$  denote the number of nodes and edges of a graph, respectively.  However, these  algorithms are {\em centralized} and rely on global information of the network.

Recently, distributed algorithms for computing the betweenness and closeness centralities in an undirected {\em tree} have been proposed in \cite{wang2013distributed,wang2014distributed} via a dynamical systems approach \textcolor[rgb]{0,0,1}{and in \cite{wang2015distributed}, where a scalable
algorithm for the computation of the closeness, based only on local interactions, is proposed.} In particular, every node computes its own centrality under only local interactions with its neighbors. In this paper, we propose finite-time convergent algorithms to distributedly compute the closeness centrality of a directed graph and the betweenness centrality of an oriented tree from the perspective of partitioning the network into multi-levels of neighbors. Our algorithms take advantage of the fact that a tree does not contain any loop, and therefore every pair of nodes has at most one shortest path.

Another important measure of centrality is eigenvector centrality, which is defined as the principal eigenvector of an adjacency matrix of the graph. It measures the influence of a node by exploiting the idea that connections to high-scoring nodes are more influential. That is, ``an important node is connected to important neighbors." In various applications, the notion of eigenvector centrality has been modified for networks that are not strongly connected, for example in systems biology
\cite{Vidyasagar:11}, Eigenfactor computation in bibliometrics \cite{Bergstrom:07}
and Web ranking \cite{brin1998anatomy}. In this case, a key idea is to introduce a so-called ``teleportation factor'' which includes a free parameter generally set to $0.15$, see \cite{ishii2014PageRank,ishii2012pagerank} for further details.
The resulting modified network then becomes strongly connected.  In particular, for Web ranking, this modification leads to the well-known PageRank \cite{langville2011google}, which has attracted significant interest  from the systems and control community \cite{ishii2014PageRank,polyak2012regularization,fercoq2013ergodic}. Our interest regarding the eigenvector centrality is particularly focused on the
distributed computation of PageRank.

Currently, the number of webpages in the Internet is incredibly large, and it is not even exactly known. This observation raises two interesting questions: (1) How to distributedly estimate the network size? Under local interactions, this problem is nontrivial as we cannot ensure to count every page, and some page(s) might be counted more than once.  (2) Without knowing the network size, how to compute the PageRank by only using  local interactions? To the best of our knowledge, these problems remain widely open. If the network is time-invariant, its size is known and a global clock is available,
several distributed randomized algorithms have been proposed in \cite{ishii2010distributed} for calculating the PageRank.  The idea lies in the design of the so-called distributed link matrices by exploiting sparsity of the hyperlink structure.  Then, each page is randomly selected to update its importance value  by interacting with those connected by hyperlinks. However, the randomization is based on an independent and identically distributed process (i.i.d.) and should be known to every page.
The need for this global information is indeed critical and the i.i.d. assumption is certainly not mild.
Furthermore, the algorithms are based on a time-averaging operation, which inevitably slows down convergence.
Other distributed PageRank algorithms are provided in  \cite{fercoq2013ergodic,ishii2014PageRank,ravazzi2015ergodic,nesterov} and references therein.

To suitably address the above questions, in this paper, we first reformulate the PageRank problem from the  least squares (LS) point of view, and then propose a randomized algorithm to incrementally compute the PageRank. Specifically, we consider a {\it Web random surfer} exploring the Internet. When browsing a webpage, the surfer incrementally updates an estimate of the PageRank by using importance values of the pages that have outgoing hyperlinks to the current page. Then, he/she will either randomly select an outgoing hyperlink of the current page and move to the page pointed by this link, or jump to an arbitrary page of the Internet, after which the PageRank estimate is updated again. This process results in a new type of the celebrated Kaczmarz algorithm \cite{zouzias2013randomized} with Markovian randomization, instead of a simpler i.i.d. randomization which may not adequately represent
a Web surfer model. In this case, the proof of convergence (both almost surely and in the $L^p$ sense) requires to exploit deep properties of Markov processes in the theory of stochastically time-varying systems \cite{guo1994stability}.

Remarkably, the proposed randomized algorithms can be conveniently implemented in a fully distributed manner, and each node simply maintains an estimate of the importance values of its neighbors and itself. In addition, every node only requires to know the number of neighboring nodes, rather than the total number of nodes in the network.  Interestingly, the network size is simultaneously estimated with probability one by an individual node. We point out that, even if the PageRank computation is reformulated as a least squares problem, existing distributed optimization algorithms \cite{necoara2013random,nedic2015distributed,iutzeler2015explicit,duchi2012dual,wei2013distributed,nesterov} are not directly applicable because they require the knowledge of the network size.

For the case of temporal networks described by time-varying graphs, where hyperlinks vary over time, but the network size is assumed to be constant,  an interesting problem is how to define the PageRank to measure the importance of each node. To the best of our knowledge, this problem has never been studied in the systems and control community. Therefore, a {\em persistent} graph is introduced to eliminate the effect of transient hyperlinks. Intuitively, PageRank should not be affected by spamming links, and the importance value of a spamming page should be negligible. Our approach is indeed very useful to deal with spamming nodes which create a large volume of hyperlinks in a short period.  A persistent graph adds large weights on persistent hyperlinks, and a larger weight on more recent hyperlinks. Then, the proposed incremental algorithm is generalized to address temporal networks with time-varying links and its convergence properties are also rigorously established.

In summary, the main contributions of this work are at least threefold. First, distributed algorithms with finite-time convergence are derived to compute the closeness and betweenness centralities in a unified setting. Second, we reformulate the PageRank problem as a LS problem and provide a new type of Markovian Kaczmarz algorithm, with rigorous convergence, to compute the PageRank. The algorithm can also be distributedly implemented even with unknown network size. Third, a novel concept of persistent graph is adopted to effectively study the PageRank problem over time-varying networks.

The rest of the paper is organized as follows. In Section \ref{sec_prp},  we provide an overview of centrality measures and the PageRank problem in both fixed and time-varying graphs. In Section \ref{sec_dcdcbc}, we design deterministic algorithms to incrementally compute the degree, closeness and betweenness centrality measures. In Section \ref{sec_dcp}, the PageRank problem is  reformulated as a least squares problem, based on which incremental algorithms are introduced to distributedly compute the PageRank. In Section \ref{sec_rce}, the incremental algorithms are randomized by mimicking the behavior of a random surfer. We also prove convergence of the randomized incremental algorithms to the PageRank. The case of temporal networks with time-varying links and related convergence properties are studied in \ref{sec_tvg}. Simulation results for a classical benchmark are included in \ref{sec_simulation}. Some concluding remarks are drawn in Section \ref{sec_conclusion}.

{\bf Notation:}~ For any vector $x\in\bR^n$ and $p>0$, let $\twon{x}_p=\left(\sum_{i=1}^n|x_i|^p\right)^{1/p}$. If $p=2$, we simply write $\twon{x}=\twon{x}_2$, which denotes the Euclidean norm of a vector. If $A$ is a matrix, we use $\twon{A}_p$ to denote the matrix norm induced by the vector norm $\twon{\cdot}_p$, i.e., $  \twon{A}_p=\sup_{\twon{x}_p=1}\twon{Ax}_p$.
The norm $\twon{A}_{L^p}$ is defined by
$\twon{A}_{L^p}=(\bE\twon{A}_p^p)^{1/p}$, where $\bE$ denotes expectation. The symbol ${\bf 1}$ represents a vector with all elements equal to one. For a symmetric matrix $A$, the notation $A\geq 0~(A>0)$
means that $A$ is positive semi-definite (definite), and
the relation $A\geq B$ means that
$A-B\geq 0$.

 \section{Centrality Measures and PageRank}
\label{sec_prp}
Ranking nodes is a crucial question in network science and has attracted a lot of attention in several scientific communities. A key problem is how to discern the importance of each node, which can be formalized by defining a centrality measure. In this section, we review some basics of complex networks and their centrality measures before concentrating on their distributed computation by means of deterministic and randomized algorithms.

Given a graph $\cG=\{\cV,\cE\}$, where $\cV:=\{1,\ldots,N\}$ is the set of vertices (nodes) and $\cE$ is the set of edges, let  $A=(a_{ij})_{i,j=1}^N$ denote its adjacency matrix, i.e. $a_{ij} = 1$ if vertex $i$ is linked to vertex $j$ and $0$, otherwise.  That is, an ordered pair $(i,j)\in\cE$ if and only if $a_{ij}=1$. A forward {\em path} in $\cG$ is a sequence of nodes $(n_1,\ldots,n_k)$ with $k\ge 2$ and a corresponding sequence of edges $(n_i,n_{i+1})\in\cE$. The {\em distance} of a path is defined as the sum of $a_{ij}$ on the path. $\cG$ is {\em strongly connected} if any pair of vertices are reachable from each other via a forward path.  Let  $\cR_i^k$ be the set of nodes that are reachable from node $i$ via exactly $k$ directed edges, and let $\cL_i^k$ be  the set of nodes that are linked to node $i$ via exactly $k$ directed edges. Moreover, self-loops are not allowed, i.e., $a_{ii}=0$ for all $i\in\cV$ and  $\cV_i=\cV\setminus\{i\}$, which is the set of nodes excluding node $i$.

\subsection{Centrality Measures}
\label{sec_centrality}
Depending on the specific application, different notions of centrality may be of interest. The most intuitive notion is arguably the degree of nodes, that is the number of neighbors of each node.  The out-degree of node $i$ is defined by
\bee
D_i=\sum_{j\in\cV} a_{ij}.
\ene

Due to its local nature, this centrality measure is of limited use, and may fail to
capture the actual role of the node in the network. Hence, more refined definitions have been proposed. Classical measures include closeness, betweenness, and eigenvector centralities.

In strongly connected\footnote{When a graph is not strongly connected, a widespread idea is to use the sum of reciprocal of distances, instead of the reciprocal of the sum of distances, with the convention $1/\infty=0$.} networks, a node is considered more important if it is closer to other nodes in the network. While fairness of a node is the sum of its shortest distances to all other nodes, closeness was defined in \cite{bavelas1950communication} as the reciprocal of the fairness, i.e.,
\bee
C_i=\frac{1}{\sum_{j\in\cV}d(i,j)}, \label{closeness}
\ene
where $d(i,j)$ denotes the shortest distance of the forward path from node $i$ to node $j$.

The betweenness \cite{freeman1977set} quantifies the ``control'' of a node on the communication between other nodes, and ranks a node higher if it belongs to more shortest paths between other two nodes in the network. Formally,  the betweenness of node $i$ is defined by
\bee
B_i=\sum_{j, k\in\cV_i}\frac{\sigma(j,k,i)}{\sigma(j,k)},
\ene
where $\sigma(j,k)$ denotes the number of shortest paths from node $j$ to $k$, and $\sigma(j,k,i)$ denotes the number of shortest paths from node $j$ to $k$ containing the node $i$.

Eigenvector centrality measures the influence of a node in a network by the entries of the principal eigenvector of the adjacency matrix, i.e., the score of eigenvector centrality of vertex $i$ is defined as
\bee
x_i= \frac{1}{\lambda} \sum_{j \in \cV} a_{ij}x_j
\ene
where  $\lambda$ is the leading eigenvalue of $A$ \cite{newman2010networks}.   \textcolor[rgb]{0,0,1}{If $A$ is primitive, it follows from Perron-Frobenius theorem~that all the elements of $x$ are positive \cite{horn1985ma}.}

The well-known PageRank is a special case of this centrality measure when the adjacency matrix is suitably redefined as a column stochastic matrix, often called the hyperlink matrix. This notion is discussed in detail in the next subsection.

\subsection{The PageRank Problem}
\label{sec_tprpf}
The PageRank of the Internet is a typical application of the eigenvector centrality for ranking webpages, and has recently attracted the attention of the systems and control community \cite{ishii2010distributed}. In this subsection, we adopt this formulation for the PageRank problem of static graphs, and extend it to temporal networks of time-varying graphs, which is based on the concept of {\em persistent} graph.

\subsubsection{The PageRank Problem in Static Graphs}
The basic idea in ranking pages in terms of the eigenvector centrality is that a page having links from important pages is also important \cite{ishii2010distributed}. This is realized by determining the importance value (eigenvector centrality of a weighted adjacency matrix) of a page as the sum of the contributions from all pages that have links to it. In particular, the importance value $x_i\in[0,1]$ of page $i$ is defined as
\bee\label{pagerank}
x_i=\sum_{j\in\cL_i^1}\frac{x_j}{D_j}, \forall i\in\cV.
\ene
It is customary to normalize the importance values so that $\sum_{i\in\cV}x_i=1$. In view of (\ref{pagerank}), the {\em hyperlink matrix} $W:=(w_{ij})\in\bR^{N\times N}$ is defined by
 \bee\label{hylmatrix}
 w_{ij}:=\left\{\begin{array}{ll} {1}/{D_j}&\text{if}~j\in\cL_i^1,\\0 &\text{otherwise.}\end{array}\right.
 \ene

 \bee
 \label{hyperlink}
 x=Wx, \bone^Tx=1,~\text{and}~x\in[0,1]^N.
 \ene

If $\mathcal{G}$ is not strongly connected, then $W$ may be reducible and $x$ is not unique.  To get around this problem, we consider a {\em random surfer} model \cite{brin1998anatomy}.  Particularly, a surfer may follow the hyperlink structure
of the Web or randomly jumps to other pages with equal probability (which is denoted as ``teleportation" in \cite{ishii2010distributed}). To accommodate this behavior, we adopt a modified hyperlink matrix $M\in\bR^{N\times N}$, which is a convex combination of two column stochastic matrices, i.e.,
 \bee
  \label{hyperlinkm}
 M:=(1-m) W+\frac{m}{N}\bone\bone^{T},
 \ene
 where $m$ is a parameter such that $m\in(0,1)$, and denotes the probability of restarting the random surfer at a given step.  This assures that: (a)  any node is  reachable from all the other nodes, so that the graph is strongly connected; and (b) the Markov chain generated by $M$ is aperiodic and irreducible \cite{norris1998markov}.   Usually, the value $m=0.15$ is chosen at Google \cite{brin1998anatomy}.

In line of \cite{ishii2014PageRank}, equation (\ref{hyperlink}) with $W$ replaced by $M$ is referred to as the {\em PageRank equation}. Note that $M$ is a column stochastic matrix with all positive entries. By Perron Theorem~\cite{horn1985ma}, one is the leading eigenvalue of $M$, and has algebraic multiplicity equal to one. Moreover, the principal eigenvector of $M$ is unique (within a multiplier), and  has all positive elements. This implies that the PageRank equation is written as
  \bee
 \label{pre}
 x=Mx~\text{and}~\bone^Tx=1,
 \ene
 and admits a unique solution $x\in[0,1]^N$. Clearly, $x_i$ also defines the eigenvector centrality of node $i$ in the Internet.

 We use the following example to illustrate the above centrality measures.
 \begin{exmp}
 Consider a directed graph with six nodes in Fig.~\ref{fig_graph}. The adjacency matrix and hyperlink matrix are given by
 \bee
 A=\begin{bmatrix}0&1&0&0&0&0\\
 1&0&1&0&0&0\\
 0&1&0&1&0&0\\
 1&0&1&0&1&1\\
 0&0&0&0&0&1\\
 0&0&1&1&0&0
  \end{bmatrix},~
  W=\begin{bmatrix}0&1\over 2&0&0&0&0\\
 1\over 2&0&1\over 3&0&0&0\\
 0&1\over 2&0&1\over 2&0&0\\
 1\over 2&0&1\over 3&0&1&1\over 2\\
 0&0&0&0&0&1\over 2\\
 0&0&1\over 3&1\over 2&0&0
  \end{bmatrix}~.\nonumber
 \ene
 \end{exmp}
 \begin{figure}
\centering
  \includegraphics[width=7cm]{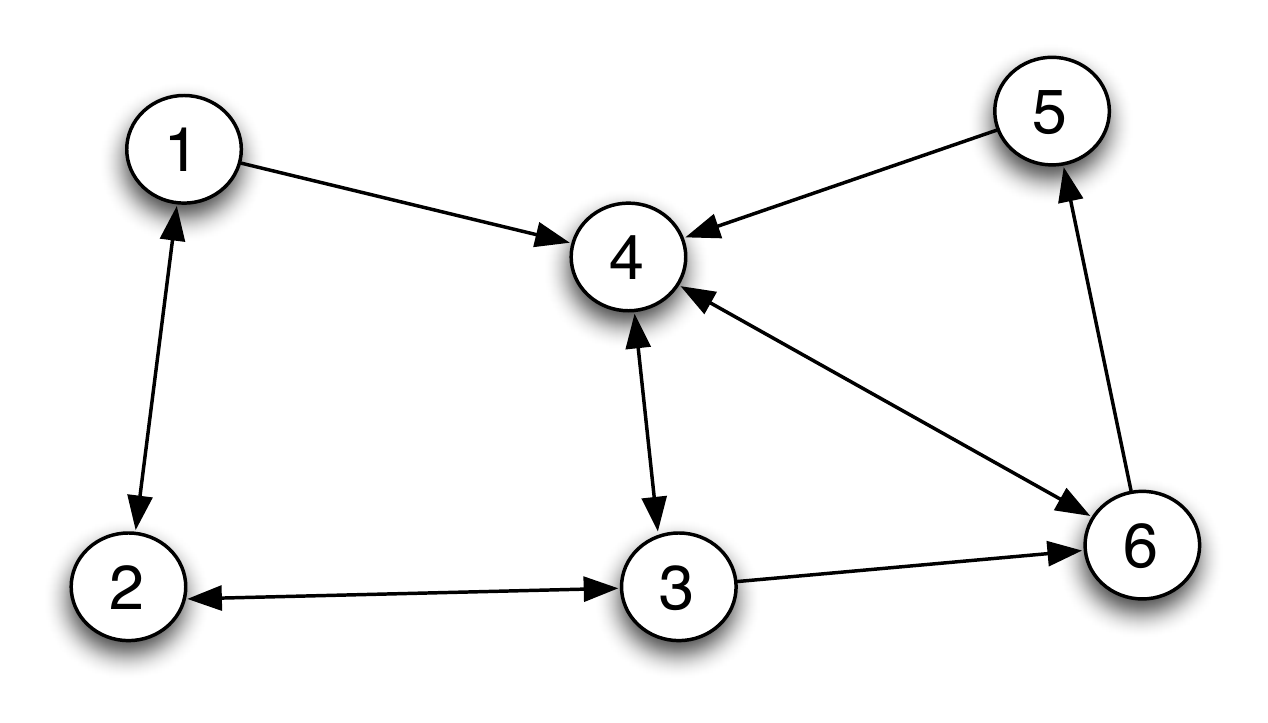}
  \caption{A directed graph with six nodes.}
  \label{fig_graph}
\end{figure}
For comparison, the degree, closeness and betweenness centrality measures are normalized and the normalized version is provided in Table \ref{table_centrality},
from which we may conclude that the nodes $3$ and $4$ are two most important nodes, while nodes $1$ and $5$ are the two least important ones.
\begin{table}
\caption{Centrality measures for the graph in Fig.~\ref{fig_graph} }
\label{table_centrality}
\centering
\begin{tabular}{|c|c|c|c|c|c|c|}
\hline
CENTRALITY & 1 & 2 & 3 &  4 &5 &6  \\
\hline
Degree &  .1667& .1667& .2500 & .1667&0.0833 & .1667 \\
\hline
Closeness &  .1708  &  .1708  &  .2196 &  .1708  & .1281  &  .1398 \\
\hline
Betweenness & .0217 & .1957 & .1957&.4130&0&.1739\\
\hline
PageRank &  .0727 & .1122 &.1986 &.2963 &.1131 &.2072\\
\hline
\end{tabular}
\end{table}

 \subsubsection{The PageRank Problem for Temporal Networks}

Obviously, the network of webpages is time-varying due to the creation or deletion of hyperlinks. In this paper, we consider the case where hyperlinks between
webpages are time-varying and the number of nodes is constant,  i.e., $\cG(k)=\{\cV, \cE(k)\}$, where $\cE(k)$ denotes the set of hyperlinks at time $k$. Then, the hyperlink matrix is no longer fixed and varies over time. An interesting problem is how to define the PageRank in this case.

Intuitively, a spamming  page is a webpage that has too many outgoing hyperlinks in a short amount of time. From this perspective, the spamming pages have only a significant, but short and negative effect on the network. However, a reasonable PageRank definition should not be much affected by spamming pages, and the importance value of a spamming page should be as small as possible. To formalize this observation we introduce a {\em persistent} graph where the effect of spamming (transient) links will be eventually excluded, and only the persistent links may significantly affect the PageRank value.
Specifically,  we define a persistent hyperlink matrix as
\bee\label{perhylkm}
\overline{W}=\lim_{k\rainfty}\frac{\varrho^k}{1+\cdots+\varrho^k}\sum_{t=1}^k \varrho^{-t}  W(t)
\ene
where $\varrho\in (0, 1]$ is a forgetting factor and provides
larger weights on the more recent hyperlink matrices\footnote{Here we implicitly assume that the limit exists.}. Then, the modified link matrix is given by
 \bee
\overline{M}=(1-m) \overline{W}+\frac{m}{N}\bone\bone^{T}.
\ene
Clearly, this definition includes the case of static graphs, and does not require the stationarity of the network. Based on previous discussions, $\overline{M}$ is a column stochastic matrix with positive entries, and the PageRank equation is expressed as
\bee
 \label{hyperlinkvaring}
 \overline{x}=\overline{M}\overline{x}~\text{and}~\bone^T\overline{x}=1.
 \ene
 \section{Distributed Computation of Degree, Closeness and Betweenness Centralities}
\label{sec_dcdcbc}
In this section, we focus on the distributed computation of the degree, closeness and between centralities in the network $\cG=(\cV,\cE)$. Obviously, the computation of the degree centrality is trivial. The key technique for the closeness and betweenness centrality measures lies on the computation of the shortest distance for each pair of nodes in the graph. This problem is solvable in a centralized manner by a linear program\cite{bertsekas1998network}. Here we propose a distributed algorithm to incrementally compute the shortest path between two nodes, which is then used to compute the closeness centrality of a directed graph and the betweenness centrality (the latter in the special case of an oriented tree).

\subsection{Degree and Closeness Computation}
Let $d_{\max}=\max_{i,j\in\cV}d(i,j)$, which is also obtainable in a distributed manner  \cite{garin2012distributed}. \textcolor[rgb]{0,0,1}{Then, $\cR_i^t=\emptyset$ for all $t>d_{\max}$ and}

\bee
d(i,j)=t, \forall j\in\cR_i^t \label{distance}
\ene
and
\bee
\bigcup_{t=1}^{d_{\max}}\cR_i^t\subseteq \cV_i.\label{partition}
\ene

If $\cG$ is strongly connected, the equality holds. Otherwise,
there exists a node that can not be reached from node $i$ and the left hand side of (\ref{partition}) is a strict proper subset of $\cV_i$.
It should be noted that $\cR_i^t$ can be empty, and $\cR_i^j\cap \cR_i^k=\emptyset$ if $k\neq j$. By (\ref{distance}) and (\ref{partition}), the computation of $d(i,j), j\in\cV_i$ is the same as the task of partitioning the set $\cV_i$.  By the local (one-hop) interaction, the $(t+1)$-hop \textcolor[rgb]{0,0,1}{neighbor set} of node $i$ is recursively given by
\bee
\cR_i^{t+1}=\bigcup_{j\in\cR_i^1} \cR_j^{t} -\bigcup_{k=1}^t \cR_i^k.
\ene

That is, the one-hop neighbors of node $i$ send their $t$-hop neighbors to node $i$, and node $i$ checks  whether they belong to its $k$-hop neighbors, $k\le t$. If not, then this node has a minimum distance $t+1$ to node $i$.   Obviously, this algorithm relies only on the local interaction with one-hop neighbors and it is provided at the end of this section.

 \subsection{Betweenness Computation of Trees}

The betweenness centrality is essential in the analysis of social networks, but costly to compute for a generic graph. A space and time efficient centralized algorithm has been proposed in \cite{brandes2001faster}. In this section, we propose a distributed method to compute the betweenness centrality measure for an oriented tree, which is a directed acyclic graph \cite{godsil2001algebraic}.

Note that a tree does not contain any cycle, and has the key feature that the number of shortest paths between two nodes is always equal to one. This implies that
 \bee
B_i=\sum_{j, k\in\cV_i}\sigma(j,k,i).
\ene

To compute $B_i$, define the set of all reachable nodes from node $i$ by
\bee
\cR_i=\bigcup_{t=1}^{d_{\max}} \cR_i^t.
\ene
We also denote the set of reachable nodes from node $i$ via the directed link $(i,j)\in\cE$ by
$\cR_{i\ra j}$. By convention, if there is no edge from node $i$ to $j$, we set $\cR_{i\ra j}=\emptyset$. Since $\cG$ is an oriented tree, then   \textcolor[rgb]{0,0,1}{for any $k\neq j$,  $\cR_{i\ra j}\cap \cR_{i\ra k}=\emptyset$. To elaborate it, suppose that} there exists a node $v\in \cR_{i\ra j}\cap \cR_{i\ra k}$. Then, there exist two directed paths from node $i$ to node $v$, respectively, via node $j$ and node $k$. If we replace directed edges with undirected ones, we obtain a cycle starting from node $i$ via node $v$ to node $i$. This is in contradiction to the definition of an oriented tree.
Hence, it is clear that
$$\cR_i=\bigcup_{j\in\cV}\cR_{i\ra j}.$$

Similarly,  define the set of all nodes linking to node $i$ by $\cL_i$
and the set of nodes linking to node $i$ via the directed link $(j, i)\in\cE$ by
$\cL_{j\ra i}$, which again satisfies that
$$\cL_i=\bigcup_{j\in\cV}\cL_{j \ra i}.$$

Now, we show how to compute the betweenness of node $i$. If there exists $u$ and $v$ such that $\sigma(u,v,i)= 1$,  we can always find a pair of nodes $j$ and $k \neq j$ such that $u\in \cL_{j\ra i}$ and $v\in \cR_{i\ra k}$. Conversely, for any pair of nodes $u\in \cL_{j\ra i}$ and $v\in \cR_{i\ra k}$, we obtain that $\sigma(u,v,i)=1$. Otherwise, it contradicts the definition of an oriented tree. This implies that betweenness of node $i$ can be explicitly computed by
\bee
B_i=\sum_{j,k\in\cV_i, j\neq k} |\cR_{i\ra j}|\cdot |\cL_{k\ra i}|.
\ene

In summary, the distributed computation of the degree, closeness and betweenness  centrality measures are given in Algorithm~\ref{alg_closeness}. We remark that the closeness and betweenness centralities are computed for an undirected tree from a dynamical system point of view in \cite{wang2014distributed,wang2013distributed}.
\begin{algorithm}
\label{alg_closeness}
\caption{Distributed computation for degree, closeness and betweenness centralities}
\begin{itemize}
\item Initialization: for every $i\in\cV$, compute $\cR_i^1$ and $\cL_i^1$ by local interactions;
\item For $t < d_{\max}$.  Given $\cR_j^t$ and $\cL_j^t$, which are obtained from local interactions with $j\in \cR_i^1$ and $j\in \cL_i^1$, respectively. Compute
\beq
\cR_i^{t+1}&=&(\cup_{j\in\cR_i^1} \cR_j^{t})-\cup_{k=1}^t \cR_i^k;\nonumber\\
\cL_i^{t+1}&=&(\cup_{j\in\cL_i^1} \cL_j^{t})-\cup_{k=1}^t \cL_i^k;\nonumber
\enq
\item Compute the cardinality of $\cR_{i\ra j}$ and $\cL_{j\ra i}$ by
$$
|\cR_{i\ra j}|=\sum_{t=1}^{d_{\max}}|\cR_j^{t} |~\text{and}~|\cL_{j\ra i}|=\sum_{t=1}^{d_{\max}}|\cL_j^{t} |.
$$
Then, output the degree, closeness and betweenness centralities by
\beq
D_i&=&|\cR_i^{1}|; \nonumber \\
C_i&=&1/\sum_{t=1}^{d_{\max}}(|\cR_i^{t} |\cdot t); \nonumber\\
B_i&=&\sum_{j\in\cR_i^1,k\in\cL_i^1, j\neq k} |\cR_{i\ra j}|\cdot |\cL_{k\ra i}|.\nonumber
\enq
\end{itemize}
\end{algorithm}

\begin{rem}[Computation Complexity] In Algorithm~\ref{alg_closeness}, the number of operations in each node is $O(|\cV|\cdot |\cR_{\max}^1|)$ for every $t$ where $|\cR_{\max}^1|=\max_{i\in\cV}|\cR_i^1|$. As $t < d_{\max}$, the total complexity of a single node is $O(|\cV| \cdot |\cR_{\max}^1|\cdot d_{\max})$, which is smaller than the centralized Brandes' algorithm $O(|\cV|\cdot |\cE|)$. However,  the total number of operations in the whole network is $O(|\cV|^2 \cdot |\cR_{\max}^1|\cdot d_{\max})$, which is larger than Brandes' algorithm due to the limited access of information in computation.
\end{rem}

 \section{Distributed Computation of PageRank}
 \label{sec_dcp}
 In this section, we provide incremental algorithms to distributedly compute the PageRank, which is also a special case of  eigenvector centrality, under two scenarios depending on whether an individual node  has the knowledge of the network size. We reformulate the PageRank computation as a least squares problem, and propose a new type of randomized Kaczmarz algorithm to incrementally compute the PageRank.  The essential idea lies in the integration of the randomized incremental algorithm with a random surfer model. The striking features of our algorithm are at least threefold: (a) It can be conveniently implemented in a {\em fully distributed} manner by using only local information of an individual page. (b) It is based on {\em Markovian randomization} (instead of simpler i.i.d. randomization), which accounts for the Web structure quite well.  (c) It can be simply generalized to accommodate  {\em temporal networks}, as discussed in Section \ref{sec_tvg}. Further comments and connections with the Kaczmarz algorithm are given in Remark~\ref{Kaczmarz} in Section V.

 \subsection{Least Squares Reformulation of the PageRank Problem}
Now, we reformulate the problem of solving the PageRank equation (\ref{pre}) as a least squares problem by using the following result.
\begin{lem}[Equivalent Equation for PageRank] The solution to (\ref{pre}) is equivalent to that of the following equation
 \bee
(I-(1-m)W) x=\frac{m}{N}\bone. \label{pagerankvector}
 \ene
\end{lem}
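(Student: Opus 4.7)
The plan is to show equivalence by direct substitution in both directions, using the definition of $M$ and the column-stochasticity of $W$.

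\textbf{Forward direction.} Starting from the PageRank equation $x = Mx$, I would substitute the definition $M = (1-m)W + \frac{m}{N}\bone\bone^{T}$ to get
\begin{equation*}
x = (1-m)Wx + \frac{m}{N}\bone(\bone^T x).
\end{equation*}
Then the key move is to use the normalization constraint $\bone^T x = 1$, which collapses the rank-one teleportation term $\frac{m}{N}\bone(\bone^T x)$ into the constant vector $\frac{m}{N}\bone$. Rearranging yields (\ref{pagerankvector}) directly.

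\textbf{Converse direction.} Starting instead from (\ref{pagerankvector}), I would first recover the normalization. Multiplying both sides by $\bone^T$ from the left gives
\begin{equation*}
\bone^T x - (1-m)\bone^T W x = \frac{m}{N}\bone^T\bone = m.
\end{equation*}
Since $W$ is column stochastic, $\bone^T W = \bone^T$, so the left-hand side simplifies to $m \bone^T x$, forcing $\bone^T x = 1$. With this normalization in hand, I can reverse the algebra of the forward step: rewrite $\frac{m}{N}\bone = \frac{m}{N}\bone\bone^T x$ and regroup to obtain $x = (1-m)Wx + \frac{m}{N}\bone\bone^T x = Mx$.

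\textbf{Uniqueness remark.} To conclude that (\ref{pagerankvector}) is a genuine reformulation (not one with extraneous or missing solutions), I would note that $(1-m)W$ has spectral radius $1-m < 1$ because $W$ is column stochastic, so $I-(1-m)W$ is nonsingular by the Neumann series and (\ref{pagerankvector}) admits a unique solution, matching the unique Perron vector of $M$ discussed just before the lemma.

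The argument contains no real obstacle; the only delicate point is being explicit that the normalization $\bone^T x = 1$ is not an extra assumption in the converse but is automatically implied by (\ref{pagerankvector}) together with the column-stochasticity of $W$. This is what makes the two formulations genuinely equivalent rather than (\ref{pagerankvector}) being a weaker consequence of (\ref{pre}).
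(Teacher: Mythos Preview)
Your proof is correct and follows the same overall strategy as the paper: the forward direction is identical, and in the converse both you and the paper exploit the column-stochasticity identity $\bone^T W=\bone^T$ to recover the normalization $\bone^T x=1$. The only difference is in execution: the paper first inverts $I-(1-m)W$ via the Neumann series $x=\frac{m}{N}\sum_{i\ge 0}(1-m)^i W^i\bone$ and then left-multiplies by $\bone^T$ term by term, whereas you left-multiply the linear equation by $\bone^T$ directly and collapse it to $m\,\bone^T x=m$ without ever expanding the inverse. Your route is a bit more elementary and avoids the series altogether; the paper's route has the minor side benefit of making the nonsingularity of $I-(1-m)W$ explicit as part of the argument rather than as a separate remark.
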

\begin{proof}
By $\bone^T x=1$, then $Mx=(1-m)Wx+\frac{m}{N}\bone=x$, which implies (\ref{pagerankvector}). Conversely, it follows from (\ref{hylmatrix}) that $W$ is a column stochastic matrix. This implies that the eigenvalue with the largest magnitude of $W$ is one. Since $m$ strictly belongs to $(0, 1)$, we obtain  that $I-(1-m)W$ is nonsingular, and its spectral radius is strictly less than one. Then, it follows from (\ref{pagerankvector})  that
\begin{eqnarray*}
x&=&\frac{N}{m} (I-(1-m)W) ^{-1}\bone \\
&=&\frac{N}{m} \sum_{i=0}^{\infty}(1-m)^iW^i  \bone.
\end{eqnarray*}

By (\ref{hylmatrix}),  it is clear that $\bone^T W=\bone^T$. This implies that
$$
\bone^Tx=\frac{N}{m} \sum_{i=0}^{\infty}(1-m)^i \bone^T W^i\bone =\frac{1}{m} \sum_{i=0}^{\infty}(1-m)^i =1.
$$
That is, the solution to (\ref{pagerankvector}) also solves (\ref{pre}).
\end{proof}

The advantage of the PageRank equation in (\ref{pagerankvector}) over (\ref{pre}) is to drop the normalization constraint $\bone^T x =1$, which implies that the PageRank can be solved via an {\em unconstrained} optimization as follows
 \bee\label{optimization}
 x=\argmin_{x\in\bR^N} \twon{(I-(1-m)W)x-\frac{m}{N}\bone}^2.
 \ene

 Then, the key problem is how to efficiently compute $x$ in a distributed manner. Let $W_i$ be the $i$-th row of $W$,  $H_i=e_i-(1-m)W_i$, where $e_i$ is the $i$-th row of an identity matrix, and $y_i=m/N$. The optimization problem  (\ref{optimization}) is easily rewritten as a LS problem
  \bee\label{leastsquare}
 x=\argmin_{x\in\bR^N}\sum_{i=1}^N (y_i-H_ix)^2.
 \ene
Note from (\ref{hylmatrix}) that $H_i$ is computable by using only information from the neighbors with outgoing links to node $i$, which is obviously known to node $i$.  Usually, $H$ is a sparse matrix, and contains many zero entries.

 Similarly, the LS reformulation of the PageRank problem for temporal networks is given as
 \bee\label{leastsquarevarying}
 \overline{x}^{ls}=\argmin_{x\in\bR^N}\sum_{i=1}^N (y_i-\overline{H}_ix)^2,
 \ene
where $\overline{H}_i=e_i-(1-m)\overline{W}_i$ and $\overline{W}_i$ is the $i$-th row of $\overline{W}$.

 Next, we state a result which guarantees the solvability and uniqueness of the LS problems (\ref{optimization}) and (\ref{leastsquarevarying}).
 \begin{lem}\label{lem_definite} For the PageRank computation,
$ \sum_{i=1}^N H_i^TH_i$ in (\ref{leastsquare}) is positive definite.
 \end{lem}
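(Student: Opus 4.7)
The plan is to recognize that the sum $\sum_{i=1}^N H_i^T H_i$ is nothing more than $H^T H$, where $H := I - (1-m)W$ has the row vectors $H_i$ as its rows. Once written this way, positive definiteness is equivalent to $H$ being nonsingular, so the whole statement reduces to the invertibility of $H$.

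First I would verify the identity $\sum_{i=1}^N H_i^T H_i = H^T H$ by straightforward expansion: the $(j,k)$-entry of $H_i^T H_i$ is $(H_i)_j (H_i)_k$, and summing over $i$ gives $\sum_i H_{ij}H_{ik} = (H^T H)_{jk}$. Then, for any nonzero $x\in\bR^N$, $x^T H^T H x = \twon{Hx}^2 \ge 0$, with equality if and only if $Hx = 0$. Hence the lemma is equivalent to showing $H$ has trivial kernel.

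Next I would invoke the argument already made in the proof of the preceding lemma: since $W$ is column stochastic by (\ref{hylmatrix}), its spectral radius is $1$, and therefore the spectral radius of $(1-m)W$ equals $1-m < 1$ because $m\in(0,1)$. Consequently $1$ is not an eigenvalue of $(1-m)W$, i.e., $H = I-(1-m)W$ is nonsingular, so $Hx=0$ forces $x=0$. Combining with the previous step yields $x^T H^T H x > 0$ for every $x\neq 0$, which is the desired positive definiteness.

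There is no real obstacle here; the lemma is essentially a corollary of the spectral-radius argument already carried out. The only thing worth flagging is that one must use the fact that $W$ is column stochastic (hence $\bone^T W = \bone^T$) rather than row stochastic, so that $\rho(W) = 1$; this has been established earlier from (\ref{hylmatrix}) and can be cited directly.
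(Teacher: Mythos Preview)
Your proposal is correct and follows essentially the same approach as the paper: the paper's proof also writes $\sum_{i=1}^N H_i^TH_i=(I-(1-m)W)^T(I-(1-m)W)$ and concludes positive definiteness from the nonsingularity of $I-(1-m)W$ established in the preceding lemma. You are simply more explicit about verifying the identity and the spectral-radius argument, but the substance is identical.
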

 \begin{proof}
Note that $I-(1-m)W$ is non-singular. The positive definiteness of $\sum_{i=1}^N H_i^TH_i$ follows from
$$\sum_{i=1}^N H_i^TH_i=(I-(1-m)W)^T(I-(1-m)W)>0.$$
Thus, the proof is completed.
  \end{proof}

 By using standard result on LS techniques \cite{kailath2000le}, it is obvious that the solution to (\ref{leastsquare}) is exactly expressed as
\bee
x=\big(\sum_{i=1}^NH_i^TH_i\big)^{-1}\big(\sum_{i=1}^NH_i^Ty_i\big).\label{lsec}
\ene

Obviously, the same arguments continue to hold for the time-varying graphs by directly substituting $H$ with $\overline{H}$.

 \subsection{Incremental Algorithms with Known Network Size}
 \label{sec_ia}
To obtain the LS solution, the formula in (\ref{lsec}) requires to utilize all $y_i$ and $H_i$ for computing the inverse of a square matrix of order $N$, which is in fact the network size.  Even worse, the sparsity of $H_i$ is not used as the matrix $\sum_{i=1}^NH_i^TH_i$ does not preserve a sparsity structure. Due to the size of the network, the computational cost of  inverting the matrix in (\ref{lsec}) is very large. This motivates to design incremental algorithms to compute the PageRank $x$ and $\overline{x}$.

Now we propose a randomized incremental algorithm with a {\em diffusion} vector $x(k)\in\bR^N$ (shown in equation (\ref{diffusion}) below) to compute the PageRank. At iteration $k$, a node, indexed as $s(k) \in\cV$, is randomly
selected according to the method described in Section V-A. This page incrementally updates $x(k)$ by performing a fusion algorithm
\beq
x(k+1)&=&x(k)-\frac{1}{2N}\cdot  \frac{d (y_{s(k)}-H_{s(k)}x)^2}{dx}|_{x=x(k)}\nonumber \\
&=&x(k)+\frac{1}{N}H_{s(k)}^T(y_{s(k)}-H_{s(k)}x(k)),\label{diffusion}
\enq
and the initial condition $x(0)=0$.

In comparison with the traditional gradient algorithms \cite{boyd2004convex}, the above algorithm only guarantees that a component $(y_{s(k)}-H_{s(k)}x(k))^2$ of the total cost $\sum_{i=1}^N (y_i-H_ix(k))^2$ is improved. Since it does not take other components into consideration, the random process $s(k)$ should be carefully designed to reduce the total cost. Thus, a deeper investigation, performed in Section V, is required to prove that the iteration in (\ref{diffusion}) solves the LS problem (\ref{leastsquare}).
\begin{rem}[Computation Complexity]
Due to the sparsity of $H_{s(k)}$, the number of computations in (\ref{diffusion1}) is small, e.g., the number of multiplications is only twice of the in-degree of node $s(k)$ and scalable to the network size. Thus, it can be easily performed by a cheap processor.
 \end{rem}

\subsection{Distributed Implementation with Unknown Network Size}
\label{sec_diuns}
Generally, the network size $N$, which is a global parameter, is unknown to an individual node.  In this case, there are two critical issues for running the fusion algorithm (\ref{diffusion}). The first issue is that both the stepsize $1/N$ and $y_{s(k)}=m/N$ are not available. To resolve this problem, we modify the incremental algorithm as
\bee
x(k+1)=x(k)+\alpha(k)\cdot H_{s(k)}^T(\widehat{y}_{s(k)}-H_{s(k)}x(k)),\label{diffusion1}
\ene
where $\widehat{y}_{s(k)}=m\cdot\alpha(k)$, and the stepsize $\alpha(k)$ is given by
\bee
\alpha(k)=\frac{1}{k+1}\sum_{t=0}^{k}\chi_{s(k)}(s(t))\label{stepsize}
\ene with the standard indicator function $\chi_{s(k)}(\cdot)$ being defined as $\chi_{s(k)}(x)=1$ if $x=s(k)$ and $0$ otherwise.

It is easy to check that $\alpha(k)$ counts the frequency of randomization of page $s(k)$.  Under relatively mild conditions,
in Section V we show that $\alpha(k)^{-1}$ converges to $N$ with probability one as the number of updates $k$ tends to infinity. Roughly speaking, this implies that the measurement estimate $\widehat{y}_{s(k)}$ asymptotically converges to $y_{s(k)}=m/N$.

 \begin{rem} [Clock Free Algorithm] The total number of activations $k+1$ for $\alpha(k)$ can be locally obtained by the activated node via a local counter. In particular, the counter increases by one at each activation, after which it is passed to (detected by) the next activated node. Under this mechanism, the activated node is able to access the total number of activations, and does not require a clock in (\ref{diffusion1}).  This is contrary to the randomized algorithms proposed in \cite{ishii2014PageRank} which indeed requires a global clock.
 \end{rem}

The second critical issue regarding unknown network size is that the dimension of $x(k)$ is unknown to an individual node. This implies that the iteration (\ref{diffusion1}) can not be executed locally either.  However, this is not a problem as only a coordinate block of $x(k)$ in (\ref{diffusion1}) is updated per iteration, and does not require any node to know the network size. To implement (\ref{diffusion1}) in a distributed manner, each page is in charge of computing a
 ``sub-PageRank'' for its neighboring pages and itself. Specifically, let $x(k)=[x_1(k),\ldots,x_N(k)]^T$ and
 $$x^{(i)}(k)=[x_i(k),x_j(k),j\in\cL_i^1]^T,$$
where  $x^{(i)}(k)$ consists of all the estimated importance values of the neighbors of node $i$ and itself. By sequentially sorting out all nonzero elements of $H_i$ and collecting them into a new vector $\widehat{H}_i\in\bR^{|\cL_i^1|+1}$, whose dimension is much smaller than $N$ for a sparse network, it follows from (\ref{diffusion1}) that
\beq
x^{(s(k))}(k+1)&=&x^{(s(k))}(k)+\label{condensed}\\
&&\alpha(k)\cdot \widehat{H}_{s(k)}^T(\widehat{y}_{s(k)}-\widehat{H}_{s(k)}x^{(s(k))}(k)).\nonumber
\enq

This iteration is fully localized, see Fig.~\ref{fig_conf} for illustration. In particular, the stepsize $\alpha(k)$ counts the percentage of updates that have been completed in page $s(k)$, which is inherently known to this page without any global information. Similarly, $\widehat{H}_{s(k)}$ is solely decided by the incoming links to page $s(k)$, which is again known to this page. It is also consistent with the observation that every page is only concerned with the rank of neighboring pages, and returns a sub-PageRank. Subsequently, each neighboring page detects its updated value in the sub-PageRank from page $s(k)$.

\begin{figure}
\centering
 \includegraphics[width=8.5cm]{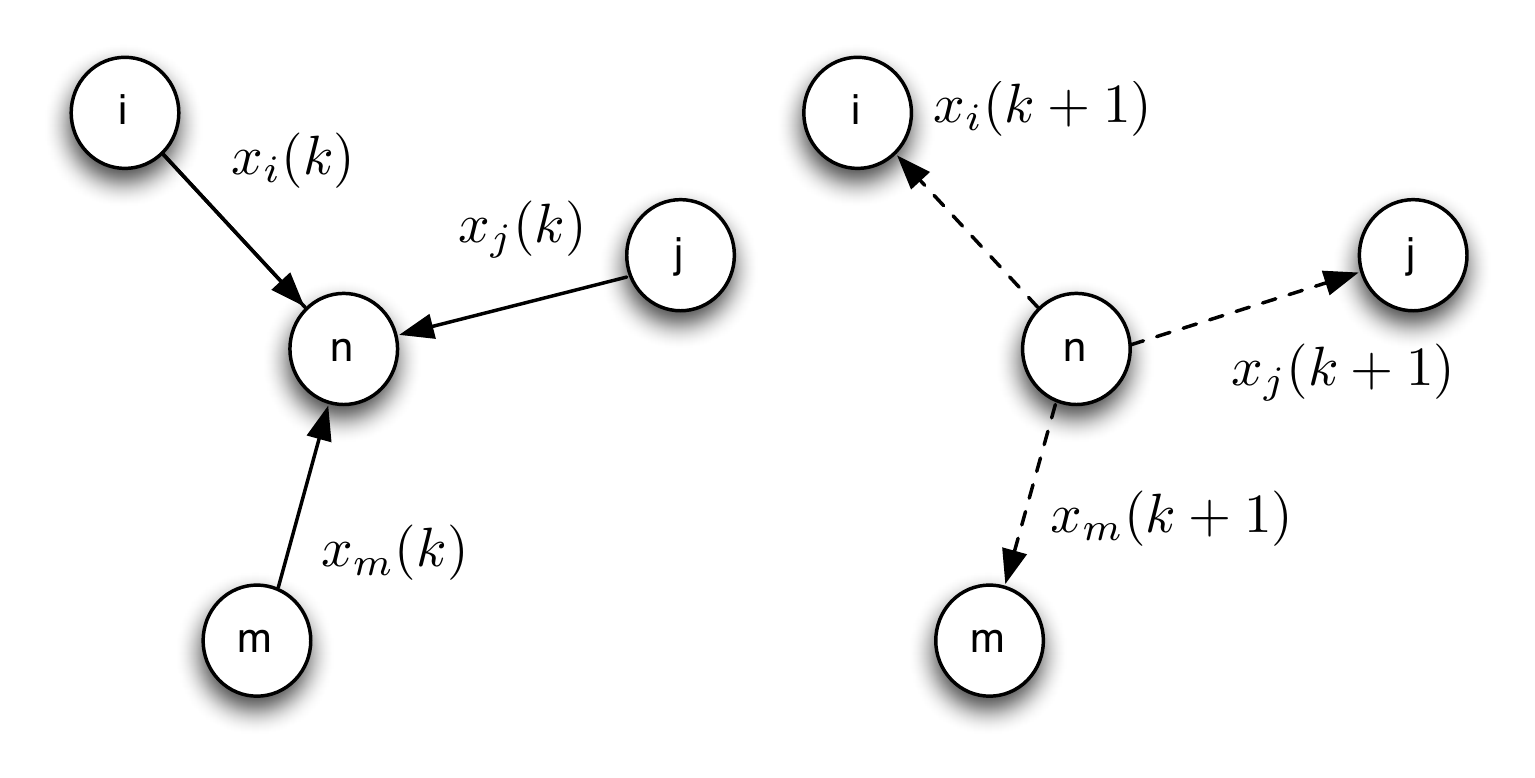}
  \caption{Local PageRank computation. If node $n$ is initiated at iteration $k$, which is detected by its neighbors, every neighbor sends its importance value $x_j(k), j\in\cL_n^1$ to this node for a local computation. Node $n$ assembles their values and performs the iteration (\ref{condensed}). The updated values are again detected by the neighbors.}
  \label{fig_conf}
\end{figure}

In summary,  the fusion algorithm (\ref{condensed}) can also be implemented in a fully distributed way for networks with unknown size. Thus, the remaining problem is to show the convergence of $x(k)$ in (\ref{diffusion1}) or  (\ref{condensed}) to the PageRank $x$. This will be addressed in Section V. Algorithm~\ref{alg_eigenvector} provides the distributed computation of PageRank with unknown network size.

\begin{rem} [Unknown Network Size] It is noteworthy to remark that the existing distributed optimization algorithms \cite{necoara2013random,nedic2015distributed,iutzeler2015explicit,duchi2012dual,wei2013distributed} cannot be directly applied here because they require to know the network size. \end{rem}

\begin{algorithm}
\label{alg_eigenvector}
\caption{Distributed computation of PageRank with unknown network size}
\begin{itemize}
\item Initialization: for every $i\in\cV$, set $x_i(0)=0$ and $s(0)=0$;
\item If $s(k)=i$,  node $j\in\cL_i^1$ sends its importance value $x_j(k)$ to this node for a local computation as in (\ref{condensed}).  Node $j\in\cL_i^1\cup \{i\} $ updates its importance value from $x_j(k)$ to $x_j(k+1)$;
\item Repeat.
\end{itemize}
\end{algorithm}

\section{Randomization and Convergence of the PageRank  for Static Graphs}
\label{sec_rce}

In this section, we analyze randomized algorithms, which may be superior to deterministic algorithms
in many aspects \cite{tempo2013randomized}, for the PageRank computation of static graphs
with both known and unknown network size. Then, we provide convergence results almost surely and in the sense of $L^p (1< p\le 2)$. The diffusion algorithm studied here is based on a specific construction of the random process $s(k)$ in (\ref{diffusion1}) which reduces the total cost in (\ref{leastsquare}), and drives the iteration to converge to the LS solution.

\subsection{Randomization of $s(k)$}

First, motivated by the random surfer model described in Section \ref{sec_tprpf}, we design the random process $s(k)$ which dictates how the nodes of the Web are selected.
At time $k$, a random surfer has a prior vector $x(k)$, and randomly browses a page, indexed as $s(k)$. After inspecting the incoming links of this page and the number of visits to this page (to update the stepsize $\alpha(k)$), the surfer incrementally updates this vector by (\ref{diffusion}) to $x(k+1)$. If $s(k)=i$, the surfer randomly jumps to page $j$ ($j\neq i$) either by following the hyperlink structure with probability $(1-\omega)P_{ij}^W$, where $\omega \in [0,1]$ is a parameter and
$P_{ij}^W$ is a probability that is defined based on the Web structure, or performs a random jump to page $j$ with probability $\omega/N$.  Here $j=i$ means that the surfer refreshes the current page with probability $(1-\omega)P^W_{ii}$ or randomly returns to the current page with probability $\omega/N$.

Motivated by this discussion, we are now ready to define the Markov chain, and its transition probability matrix, of the random process $s(k)$ in (\ref{diffusion1}).

\begin{defi}[Transition Probability Matrix for Static Graphs]\label{assumption1}
The incremental process $s(k)$ is a Markov chain with a transition probability matrix $P=(1-\omega)P^W+\frac{\omega}{N}\bone\bone^T$ where $\omega\in[0,1]$, and $P^W$ is given by
\bee
P^W_{ij}=
\left\{
\begin{array}{ll}
\min\{\frac{1}{D_i+1},\frac{1}{D_j+1}\}&\text{if}~(i,j)\in\cE,\\
1-\sum_{(i,k)\in\cE}P^W_{ik}&\text{if}~i=j,\\
0&\text{otherwise.}
\end{array}
\right.\nonumber
\ene
\end{defi}

The matrix $P^W$ (generally called the Metropolis-Hastings matrix) is {\em doubly stochastic}, and it
 coincides a so-called min-equal neighbor scheme \cite{xiao2007distributed}. We notice that the second term $\bone\bone^T/N$ in the transition probability matrix is motivated by the teleportation model
described in Section \ref{sec_tprpf}. The convex combination parameter $\omega\in[0,1]$ represents two modes of randomization, which are studied in a unified setting: (a) centralized randomization of (\ref{diffusion}) where $\omega$ can be selected as any value in $(0,1]$. The extreme case $\omega=1$ means that the random process $s(k)$ reduces to an i.i.d. process as in the randomized Kaczmarz algorithm \cite{zouzias2013randomized}. Usually, a larger value of $\omega$ implies better convergence performance of (\ref{diffusion}).
(b) distributed randomization of (\ref{condensed}) in Section \ref{sec_diuns}, where randomization is allowed to pass from one page only to its neighbors and the network size is unknown.

The case (b) corresponds to $\omega=0$ in the transition probability matrix $P$ and is applicable only to a strongly connected Web. \textcolor[rgb]{0,0,1}{Suppose the Web is not strongly connected, there exists a node $i$ that can not be reached from some other node $j$. Then, once node $j$ is activated, node $i$ can {\em never} be activated again.} Even though the Web is generally not strongly connected, strong connectivity may be enforced in practice, as discussed in \cite{ishii2014PageRank}. In particular, if the network contains dangling  pages and is not strongly connected, the surfer may use a back button and return to the previously visited page to continue randomization. In the centralized randomization case, the second term of $P$ requires that every pair of pages should be reachable from each other. This is a natural assumption, which is now formally stated.

\begin{assum}[Strong connectivity]\label{strg} For the distributed randomization,  i.e., $\omega=0$ in the transition probability matrix $P=(1-\omega)P^W+\frac{\omega}{N}\bone\bone^T = P^W$, we assume that the network of webpages is strongly connected.
\end{assum}

A nice property, which follows from Definition~\ref{assumption1} of transition probability matrix and Assumption \ref{strg} on strong connectivity is that $s(k)$ admits a unique stationary distribution, which is  uniformly distributed over the set $\cV$. This implies that
\beq
\cH&:=&\lim_{k\rainfty}\bE[H^T_{s(k)} H_{s(k)}]=\frac{1}{N}\sum_{i=1}^N H_i^TH_i,\label{meanh}\\
\cY&:=&\lim_{k\rainfty}\bE[H^T_{s(k)} y_{s(k)}]=\frac{1}{N}\sum_{i=1}^N H_i^Ty_i,\nonumber
\enq
where the expectation operator $\bE[\cdot]$ is taken with respect to the random process $s(k)$.

\subsection{Convergence Analysis}

The first convergence result we state for the randomized incremental algorithm is for known network size.

\begin{thm} [Convergence with Known Network Size] \label{thm_convergence}  Under Assumption~\ref{strg} on strong connectivity, the randomized incremental algorithm (\ref{diffusion}) with transition probability matrix given in Definition~\ref{assumption1} enjoys the following properties:
\begin{enumerate}
\renewcommand{\labelenumi}{\rm(\alph{enumi})}
\item There exist $k_0>0$, \textcolor[rgb]{0,0,1}{$M_1>0$, and $\rho_1\in(0,1)$}  such that $\twon{x(k)-x}\le M_1\cdot \rho_1^k$ with probability one for \textcolor[rgb]{0,0,1}{all} $k>k_0$.
\item   Given any $p\in[1,2]$, \textcolor[rgb]{0,0,1}{there exist $M_2>0$ and $\rho_2\in(0,1)$ such that $\twon{x(k)-x}_{L^p}\le M_2\cdot\rho_2^k$ for all $k>0$}.
 \end{enumerate}
\end{thm}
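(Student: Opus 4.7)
The plan is to reduce the claim to exponential stability of a random linear recursion driven by the Markov chain $s(k)$. Setting $e(k):=x(k)-x$, Lemma~1 yields $H_i x = y_i$ for each $i\in\cV$ (a componentwise reading of $(I-(1-m)W)x=\tfrac{m}{N}\bone$), so subtracting $x$ from both sides of~(\ref{diffusion}) gives the error recursion
$$
e(k+1) \;=\; A_{s(k)}\, e(k), \qquad A_i := I - \tfrac{1}{N}\, H_i^{T} H_i.
$$
Each $A_i$ is symmetric with $N-1$ eigenvalues equal to $1$ and a single eigenvalue $1-\|H_i\|^2/N\in(0,1)$, using $\|H_i\|^2\le 1+(1-m)^2(N-1)<N$ for $m\in(0,1)$. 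Hence every factor is non-expansive and $\|e(k)\|$ is pathwise non-increasing, but no single $A_i$ is a strict contraction: the strict contraction must be extracted on average from the joint action of $\{A_i\}_{i\in\cV}$, whose ranges together span $\bR^N$ by Lemma~\ref{lem_definite}.

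I would first prove (b) for $p=2$; the cases $p\in[1,2)$ then follow from Lyapunov's inequality $\|\cdot\|_{L^p}\le\|\cdot\|_{L^2}$. Under Assumption~\ref{strg} and Definition~\ref{assumption1}, the finite chain $s(k)$ is irreducible and aperiodic, hence geometrically ergodic with stationary distribution $\pi_j=1/N$, so the $T$-step distribution converges to $\pi$ in total variation at rate $O(\gamma^T)$ from any starting state, for some $\gamma\in(0,1)$. Combining this with Lemma~\ref{lem_definite}, which yields
$$
\bar A \;:=\; \tfrac{1}{N}\sum_{j=1}^N A_j^2 \;=\; I - \tfrac{1}{N^2}\sum_{j=1}^N\bigl(2-\|H_j\|^2/N\bigr)\,H_j^T H_j \;\prec\; I,
$$
I would show that for $T$ large enough the window product satisfies, uniformly in $k$ and in the conditioning on $s(0),\dots,s(k-1)$,
$$
\bE\bigl[(A_{s(k+T-1)}\cdots A_{s(k)})^{T}(A_{s(k+T-1)}\cdots A_{s(k)})\bigr] \;\preceq\; (1-\delta)\,I,
$$
for some $\delta>0$. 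This is precisely the type of contraction estimate for Markov-modulated linear systems developed in~\cite{guo1994stability}. Iterating across successive windows and taking expectations gives $\bE\|e(k)\|^2\le M_2^2\rho_2^{2k}$.

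For (a), the almost-sure exponential bound follows from (b) by a Borel--Cantelli argument. Markov's inequality gives $\Pr(\|e(k)\|>\rho^k)\le\bE\|e(k)\|^2/\rho^{2k}\le M_2^2(\rho_2/\rho)^{2k}$, and any $\rho\in(\rho_2,1)$ renders this series summable. Thus with probability one there exists a finite random $k_0$ such that $\|e(k)\|\le\rho^k$ for all $k>k_0$, which is the statement of (a) with $\rho_1=\rho$ and $M_1=1$.

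The principal obstacle is the window contraction estimate: decoupling the random matrix product from the initial state of the chain and keeping $\delta$ bounded away from $0$ uniformly in the conditioning. This is the essential departure from the classical i.i.d.\ Kaczmarz analysis---in which $\bE[A_{s(k)}^2]=\bar A\prec I$ already suffices at the one-step level---and is where the geometric mixing of $s(k)$ must be combined with the span condition supplied by Lemma~\ref{lem_definite}.
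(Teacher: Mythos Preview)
Your proposal is correct, and for part~(b) it matches the paper's approach: the paper packages the window contraction estimate as Lemma~\ref{lem_trans}(a), whose proof invokes precisely Theorems~2.1 and~2.3 of~\cite{guo1994stability} after verifying $0\le \tfrac{1}{N}H_{s(k)}^T H_{s(k)}\le I$ and the $\varphi$-mixing property of the chain; the error recursion $e(k+1)=\Phi(k+1,0)e(0)$ then yields the $L^2$ (hence $L^p$, $p\le 2$) bound immediately.

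Where you differ is in part~(a). You deduce the almost-sure exponential bound from~(b) via Markov's inequality and Borel--Cantelli. The paper instead proves it independently of~(b): Lemma~\ref{lem_trans}(b) applies the Ergodic Theorem to the averages $\tfrac{1}{t}\sum_{i=1}^t \log\|\Phi(j+iN,\,j+(i-1)N)\|$ and shows the limit equals $\bE[\log\|\Phi(N,0)\|]<0$, the strict negativity coming from the event $\{s(1)=1,\dots,s(N)=N\}$, which has positive probability and on which $\|\Phi(N,0)\|<1$ by a span argument (Lemma~3.52 of~\cite{costa2005dtm}). Your route is more economical and reuses~(b); the paper's ergodic argument is self-contained (it does not need~(b)), yields a rate constant tied directly to $p_0\log\|\Phi_0\|$, and in both cases the threshold $k_0$ is sample-path dependent.
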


\begin{rem} [Convergence with Known Network Size]
If the network size is known,  the randomized algorithm (\ref{diffusion}) exponentially converges to the PageRank  almost surely and in the sense of $L^p (1\le p\le 2)$.  From this point of view, this convergence result is much stronger than those stated for the PageRank algorithms in \cite{ishii2010distributed}, where mean square error convergence is stated with linear convergence rate \textcolor[rgb]{0,0,1}{$O(1/k)$}, see further discussions in Section V-C.
\end{rem}

\begin{rem} [Relations with Randomized Algorithms for Distributed Optimization]
The idea of using randomized incremental algorithms has been adopted in \cite{johansson2009randomized,nedic2001incremental}  for solving distributed optimization problems. However, the convergence in these papers requires the stepsizes decreasing to zero, which inevitably reduces the convergence rate. In addition, the proof of convergence is completely different from that of Theorem~\ref{thm_convergence}.
%, which is based on the theory of stochastically time-varying systems \cite{guo1994stability}.
\end{rem}

\begin{rem}[Comparisons with the Kaczmarz Algorithm]\label{Kaczmarz}
There are several striking differences between the randomized diffusion algorithm (\ref{diffusion}) studied in this section and the celebrated randomized Kaczmarz algorithm in \cite{zouzias2013randomized}, which has been originally developed to solve generic linear equations.

First, the random process $s(k)$ is Markovian and takes into account the surfer's browsing history.
This case covers the i.i.d. process (which is assumed in the randomized Kaczmarz algorithm) as a very special case, i.e. in Definition~\ref{assumption1}, $\omega$ is equal to one in the transition probability matrix $P$.
Notice that Markovian randomization can be easily implemented by only allowing nodes to communicate with their neighbors. Therefore, it is a very general randomization scheme which is suitable to describe a random surfer model on the Web.

Second, the use of Markovian randomization in (\ref{diffusion}) makes the analysis of convergence much more difficult. Clearly, the techniques for studying convergence of the i.i.d. randomized Kaczmarz algorithm are no longer applicable. The same comment holds for random coordinate descent algorithms
 \cite{nesterov}.
In fact, the proof of convergence of Theorem~1 is based on sophisticated technical results in the theory of stochastically time-varying systems \cite{guo1994stability}. More precisely, the key technical result is Lemma~3 stated below, which deals with convergence of transition matrices, and it is based on ergodicity properties of $\phi$-mixing processes.

Third, contrary to the randomized Kaczmarz algorithm, the algorithms and the convergence results can be extended to temporal networks with time-varying links, which is a significant extension, subsequently provided in Section VI.

Finally, we notice that, when applying the randomized Kaczmarz algorithm \cite{zouzias2013randomized} to the PageRank problem, the probability of choosing a node is proportional to the size of the regression vector, i.e.,
\bee
\bP\{s(k)=i\} = \twon{H_i}^2 / \sum_{i=1}^{N} \twon{H_i}^2.\label{prob}\ene
This implies that the probability vector for randomization relies on the global information in the denominator of (\ref{prob}). This is different from the randomized diffusion algorithm studied in this section, where the distribution of $s(k)$ tends to be asymptotically uniform and is unknown to any node.
\end{rem}

The proof of Theorem~\ref{thm_convergence} depends on a key technical result, which is stated in this section for completeness. The proof is given in the Appendix.

\begin{lem}[Convergence of Transition Matrices]\label{lem_trans}
Let the transition matrix be $\Phi(k+1,j)=(I-1/N \cdot H_{s(k)}^TH_{s(k)})\Phi(k,j)$ for all  $j\le k$ and $\Phi(k,k)=I$.
Under Assumption \ref{strg} on strong connectivity, the randomized incremental algorithm (\ref{diffusion}) with transition probability matrix given in Definition \ref{assumption1}  enjoys the following properties:
\begin{enumerate}
\renewcommand{\labelenumi}{\rm(\alph{enumi})}
\item There exists a positive $M>0$ and $\rho\in(0, 1)$ such that
\bee\label{convergencerate}
\twon{\Phi(k,j)}_{L^2}\le M\rho^{k-j}, \forall k \ge j\ge 0.
\ene
\item There exists a sufficiently large $k_0>0$ and $\eta\in(0, 1)$  such that with probability one
\bee
\twon{\Phi(k,j)}\le \eta^{\lfloor\frac{k-j}{N}\rfloor}~\text{if}~ k-j>k_0,
\ene
where $\lfloor x \rfloor$ is the floor function, i.e. the largest integer not greater than $x\in\bR$.
\item  With probability one, it holds that
\bee\label{bound_trans}
\lim_{k\rainfty}\sum_{j=0}^k\twon{\Phi(k,j)}<\infty.
\ene
\end{enumerate}
\end{lem}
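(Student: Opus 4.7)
My plan is to combine a deterministic window-contraction lemma for products of the symmetric matrices $A(t) := I - (1/N) H_{s(t)}^T H_{s(t)}$ with the ergodic properties of the Markov chain $\{s(k)\}$ induced by Definition~\ref{assumption1}. Since $\twon{H_i}^2 \leq 1 + (1-m)^2 N < 2N$, each $A(t)$ has spectrum contained in $(-1,1]$, so $\twon{A(t)} \leq 1$ and $\twon{\Phi(k,j)} \leq 1$ trivially; what must be established is that strict contraction eventually occurs. Under Assumption~\ref{strg} (needed only when $\omega = 0$), $\{s(k)\}$ is an irreducible aperiodic chain on the finite set $\cV$, hence it is geometrically ergodic with the unique (uniform) stationary distribution $\bone/N$.

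The cornerstone is a deterministic contraction lemma: if the path $(s(j), \ldots, s(k-1))$ visits every node of $\cV$, then $\twon{\Phi(k,j)} \leq \sqrt{1-\delta}$ for a universal $\delta > 0$. To prove it, set $y_{t+1} = A(t)y_t$ with $y_j = x$; expanding $A(t)^2 = I - (2 - \twon{H_{s(t)}}^2/N)(1/N) H_{s(t)}^T H_{s(t)}$ and using the bound on $\twon{H_i}^2$ above yields the per-step decrease
$$\twon{y_{t+1}}^2 \leq \twon{y_t}^2 - \frac{c}{N}(H_{s(t)} y_t)^2$$
for a universal $c > 0$. Telescoping gives $\sum_t (H_{s(t)}y_t)^2 \leq (N/c)(\twon{y_j}^2 - \twon{y_k}^2)$. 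Suppose for contradiction $\twon{y_k}^2 \geq (1-\varepsilon)\twon{y_j}^2$ for small $\varepsilon$. Then each $|H_{s(t)} y_t| = O(\sqrt{\varepsilon})\twon{y_j}$, and summing $\twon{y_{t+1} - y_t} \leq (1/N)\twon{H_{s(t)}}\cdot|H_{s(t)}y_t|$ produces a drift estimate $\twon{y_t - y_j} = O(\sqrt{\varepsilon})\twon{y_j}$. These two bounds together force $|H_i y_j| = O(\sqrt{\varepsilon})\twon{y_j}$ for every visited $i \in \cV$, contradicting the strict lower bound $y_j^T(\sum_i H_i^T H_i) y_j \geq \lambda_{\min}\twon{y_j}^2 > 0$ guaranteed by Lemma~\ref{lem_definite}.

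With the contraction lemma in hand, part (a) follows by choosing a window of length $T_0$ large enough that, conditional on any state of the chain at time $j$, the window $[j, j+T_0)$ covers $\cV$ with probability at least some $p_0 > 0$; the existence of $T_0$ is the finite cover time of an irreducible aperiodic chain on $N$ states. Then $\bE[\twon{\Phi(j+T_0,j)}^2 \mid \cF_j] \leq p_0(1-\delta) + (1-p_0) = 1 - p_0\delta$, and iterating over disjoint windows via the tower property yields $\twon{\Phi(k,j)}_{L^2}^2 \leq (1-p_0\delta)^{\lfloor (k-j)/T_0\rfloor}$, i.e. the desired geometric bound. For (b), I apply the Furstenberg--Kesten (Kingman subadditive) ergodic theorem to the stationary subadditive cocycle $\log\twon{\Phi(j+k,j)}$ (subadditivity follows from $\twon{\Phi(j+k+\ell, j)} \leq \twon{\Phi(j+k+\ell, j+k)}\cdot\twon{\Phi(j+k,j)}$): the limit $\lambda^* = \lim_k (1/k)\log\twon{\Phi(k,0)}$ exists almost surely and satisfies $\lambda^* \leq \inf_k (1/k)\bE\log\twon{\Phi(k,0)} \leq \inf_k (1/2k)\log\bE\twon{\Phi(k,0)}^2 \leq \log\rho < 0$ by Jensen and part (a); rescaling by the natural time scale $N$ of the chain yields the form $\eta^{\lfloor(k-j)/N\rfloor}$ for $k-j$ beyond some deterministic $k_0$. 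Part (c) is an immediate corollary: the a.s. geometric bound from (b) dominates $\sum_{j=0}^k \twon{\Phi(k,j)}$ by a convergent geometric series.

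The main obstacle will be the contraction lemma: translating the global spectral positivity $\sum_i H_i^T H_i \succ 0$ into a uniform per-window contraction despite the drift of $y_t$ across the window demands the careful $\varepsilon$-scaling contradiction above, and it is crucial that the $O(\sqrt{\varepsilon})$ drift estimate on $\twon{y_t - y_j}$ matches the $O(\sqrt{\varepsilon})$ bound on $|H_{s(t)}y_t|$ so that they combine into a consistent $O(\sqrt{\varepsilon})$ estimate on $|H_i y_j|$. A secondary subtlety is the passage from the $L^2$ decay in (a) to the uniform-in-$(j,k)$ almost-sure bound in (b); this is handled using the shift-invariance of the stationary Markov chain together with the subadditive ergodic theorem, both of which are classical tools in the theory of stochastic stability referenced in \cite{guo1994stability}.
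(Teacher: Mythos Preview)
Your proposal is correct but follows a genuinely different route from the paper. For part (a), the paper treats $\{s(k)\}$ as a $\varphi$-mixing process and invokes Theorems~2.1 and~2.3 of \cite{guo1994stability} as black boxes to obtain the $L^2$ exponential bound; your deterministic window-contraction lemma (any window covering $\cV$ yields a strict contraction) combined with a cover-time plus tower-property argument is essentially a self-contained persistence-of-excitation proof of the same fact, and avoids the external machinery. One minor caveat: your $\delta$ is not truly ``universal'' but depends on the window length through the drift estimate $\twon{y_t-y_j}$; this is harmless because you only apply the lemma at the fixed length $T_0$. For part (b), the paper applies the Birkhoff ergodic theorem to the block sequence $\log\twon{\Phi(j+iN,j+(i-1)N)}$ and bounds $\bE[\log\twon{\Phi(N,0)}]<0$ via the specific positive-probability event $\{s(1)=1,\dots,s(N)=N\}$ together with a contraction result for that particular ordered product (Lemma~3.52 of \cite{costa2005dtm}); your use of Kingman's subadditive ergodic theorem plus Jensen and part (a) is cleaner and reuses (a) rather than requiring a separate deterministic contraction fact. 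Both arguments share the same mild looseness regarding uniformity of the threshold $k_0$ over $j$. Part (c) is handled identically in both. In short, your approach trades the cited stochastic-stability and Markov-jump references for elementary arguments, at the cost of a slightly longer (but more transparent) proof of the contraction step.
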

%\begin{proof} Please refer to  Appendix A.
%\end{proof}
{\em Proof of Theorem~\ref{thm_convergence}}:~By (\ref{pagerankvector}) and (\ref{leastsquare}), it is obvious that $y_{s(k)}=H_{s(k)}x$.  Let $e(k):=x(k)-x$, it follows from (\ref{diffusion}) that
$e(k+1)=(I-1/N \cdot H_{s(k)}^TH_{s(k)})e(k)=\textcolor[rgb]{0,0,1}{ \Phi(k+1,0)}e(0).$
The rest of proof is a trivial consequence of Lemma~\ref{lem_trans}. \qed

Next, we establish the convergence results of the randomized incremental algorithms (\ref{diffusion1}) and (\ref{condensed}) with unknown network size.

\begin{thm}[Convergence with Unknown Network Size]\label{thm_convergence1}
Under Assumption \ref{strg} on strong connectivity, the randomized incremental algorithms  (\ref{condensed}) with transition probability matrix given in Definition \ref{assumption1} enjoys the following properties:
\begin{enumerate}
\renewcommand{\labelenumi}{\rm(\alph{enumi})}
\item (Almost sure convergence)
$\lim_{k\rainfty}x(k)=x$ with probability one.
\item ($L^p$ convergence) $\lim_{k\rainfty} \twon{x(k)-x}_{L^p}=0$ for any $ \textcolor[rgb]{0,0,1}{p\ge1}$.
 \end{enumerate}
\end{thm}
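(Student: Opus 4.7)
The plan is to first show that the random stepsize $\alpha(k)$ converges to $1/N$, and then treat (\ref{condensed}) as a perturbation of the iteration (\ref{diffusion}) analyzed in Theorem~\ref{thm_convergence}. The extra difficulty over Theorem~\ref{thm_convergence} comes entirely from the stepsize mismatch and the associated driving noise, both proportional to $\alpha(k)-1/N$.

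By Definition~\ref{assumption1} and Assumption~\ref{strg}, $\{s(k)\}$ is a finite-state, irreducible, aperiodic Markov chain whose transition matrix is doubly stochastic: $P^W$ is the Metropolis--Hastings matrix (symmetric, hence doubly stochastic) and the uniform teleportation block (when $\omega>0$) is also doubly stochastic, so their convex combination is too. Consequently the unique stationary distribution is uniform on $\cV$. The ergodic theorem for Markov chains then gives, for every fixed $i\in\cV$, $(k+1)^{-1}\sum_{t=0}^{k}\chi_i(s(t))\to 1/N$ a.s.; since $\cV$ is finite this is uniform in $i$, so $\alpha(k)\to 1/N$ a.s. Because $\alpha(k)\in[0,1]$, dominated convergence upgrades this to $\alpha(k)-1/N\to 0$ in every $L^p$.

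Set $e(k):=x(k)-x$. Using $H_{s(k)}x=m/N$ (Lemma~1) and $\widehat y_{s(k)}=m\alpha(k)$, the recursion (\ref{diffusion1}) can be split as
$$
e(k+1)=\bigl(I-\tfrac{1}{N}H_{s(k)}^T H_{s(k)}\bigr)e(k)+u(k),
$$
where $u(k):=-(\alpha(k)-1/N)H_{s(k)}^T H_{s(k)}e(k)+m\alpha(k)(\alpha(k)-1/N)H_{s(k)}^T$ carries a factor $(\alpha(k)-1/N)$ that vanishes a.s. and in $L^p$. Using the transition matrix $\Phi(k,j)$ of Lemma~\ref{lem_trans}, the variation-of-parameters formula yields
$$
e(k)=\Phi(k,0)e(0)+\sum_{j=0}^{k-1}\Phi(k,j+1)u(j).
$$
Lemma~\ref{lem_trans}(b) immediately handles the first term. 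For (a), the next step is to show $\|e(k)\|$ is a.s.\ bounded: since $\alpha(k)\to 1/N$ a.s., there is a sample-path dependent $k_1$ beyond which $\alpha(k)\|H_{s(k)}\|^2<2$, making $\|I-\alpha(k)H_{s(k)}^T H_{s(k)}\|\le 1$ and thereby controlling the growth of $\|e(k)\|$. With $\|e(k)\|$ a.s.\ bounded and $\|H_{s(k)}\|$ uniformly bounded, $u(k)\to 0$ a.s. Combining the a.s.\ exponential decay in Lemma~\ref{lem_trans}(b) with the a.s.\ summability in Lemma~\ref{lem_trans}(c), a discrete Toeplitz/Kronecker-type argument gives $\sum_{j}\Phi(k,j+1)u(j)\to 0$ a.s., proving (a).

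For (b), I would exploit the $L^2$ bound $\|\Phi(k,j)\|_{L^2}\le M\rho^{k-j}$ of Lemma~\ref{lem_trans}(a) together with H\"older's inequality and the uniform $L^\infty$ envelope of $\|e(k)\|$ to dominate $\|e(k)\|_{L^p}$ by a geometric convolution of $\|u(j)\|_{L^p}$; the convolution vanishes because $\alpha(j)-1/N\to 0$ in every $L^q$. The main obstacles are (i) obtaining the a.s.\ and $L^p$ envelopes on $\|e(k)\|$ that are needed to lift $u(k)\to 0$ from a formal identity to a usable bound, and (ii) reconciling Lemma~\ref{lem_trans}'s sample-path estimates with $L^p$ convergence: $\Phi(k,j)$ and $u(j)$ are strongly coupled through the Markovian randomization, so independence-based arguments cannot be used and the bounds must be propagated through H\"older inequalities that exploit the geometric decay of $\Phi$ rather than any summability of $|\alpha(j)-1/N|$.
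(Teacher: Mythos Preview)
Your plan has the right ingredients, but the paper makes a different structural choice that removes precisely the obstacle you flag. Instead of splitting off the nominal $1/N$ stepsize and pushing the mismatch into $u(k)$, the paper keeps the \emph{actual} stepsize in the contraction part:
\[
e(k+1)=\bigl(I-\alpha(k)H_{s(k)}^TH_{s(k)}\bigr)e(k)+\alpha(k)H_{s(k)}^T\widetilde y_{s(k)},\qquad \widetilde y_{s(k)}=m\bigl(\alpha(k)-\tfrac1N\bigr),
\]
and defines a perturbed transition matrix $\widetilde\Phi(k,j)$ built from $I-\alpha(k)H_{s(k)}^TH_{s(k)}$. A short corollary of Lemma~\ref{lem_trans} (using only $\alpha(k)\to 1/N$ a.s.) shows that $\widetilde\Phi$ inherits the same almost-sure geometric decay and summability $\sum_{j\le k}\|\widetilde\Phi(k,j)\|<\infty$. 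The gain is that the forcing term $\alpha(k)H_{s(k)}^T\widetilde y_{s(k)}$ is now \emph{exogenous}: it does not involve $e(k)$, it tends to zero a.s., and Toeplitz's lemma applied to the variation-of-parameters sum finishes part~(a) directly---no a~priori envelope on $\|e(k)\|$ is needed.

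Your decomposition, by contrast, puts $-(\alpha(k)-1/N)H_{s(k)}^TH_{s(k)}e(k)$ into $u(k)$, so you must first bound $\|e(k)\|$ uniformly. The sketch you give for this is not enough: from $\|I-\alpha(k)H_{s(k)}^TH_{s(k)}\|\le 1$ for $k\ge k_1$ you only get $\|e(k+1)\|\le \|e(k)\|+C|\alpha(k)-1/N|$, and for an ergodic Markov chain the occupation-measure error is of order $k^{-1/2}$, so $\sum_k|\alpha(k)-1/N|=\infty$ and the bound blows up. To rescue this you would have to establish decay of $\widetilde\Phi$ anyway, i.e.\ essentially reprove the paper's corollary---at which point the paper's decomposition is strictly simpler.

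For part~(b) the paper's route is also much shorter than your H\"older/coupling plan: once (a) gives $e(k)\to 0$ a.s., one has $\sup_k\|e(k)\|<\infty$ a.s., and a single application of the Dominated Convergence Theorem yields $\|e(k)\|_{L^p}\to 0$ for every $p\ge 1$, bypassing entirely the $\Phi$--$u$ correlation issue you identify.
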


The effect of the stepsize $\alpha(k)$ on the randomized incremental algorithms is essentially the same as that of $1/N$ as formally stated in the next result.

\begin{lem} \label{lem_ergodic}
Under Assumption \ref{strg} on strong connectivity, the stepsize with transition probability matrix given in Definition \ref{assumption1} satisfies the property
\bee
\lim_{k\rainfty}\alpha(k)=1/N.
\ene
\end{lem}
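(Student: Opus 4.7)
The plan is to identify $\alpha(k)$ as the empirical visit frequency of a finite-state Markov chain and invoke ergodicity. First, I would set $N_i(k):=\sum_{t=0}^k\chi_i(s(t))$, the number of visits of the chain $s(\cdot)$ to state $i$ up to time $k$, and observe that, because exactly one of the indicators $\chi_i(s(k))$, $i\in\cV$, is nonzero,
\[
\alpha(k)=\frac{N_{s(k)}(k)}{k+1}=\sum_{i=1}^N\chi_i(s(k))\cdot\frac{N_i(k)}{k+1}.
\]
So the lemma reduces to showing that $N_i(k)/(k+1)\to 1/N$ uniformly in $i\in\cV$ with probability one.

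Next I would verify that the chain $s(\cdot)$ with transition matrix $P=(1-\omega)P^W+(\omega/N)\bone\bone^T$ is ergodic with uniform stationary distribution. Both $P^W$ (a Metropolis--Hastings matrix) and $\bone\bone^T/N$ are doubly stochastic, so $P$ is doubly stochastic and hence $\pi=(1/N)\bone$ is a stationary distribution. For $\omega>0$ irreducibility and aperiodicity are immediate from the all-positive teleportation term; for the distributed case $\omega=0$, Assumption \ref{strg} gives irreducibility, and aperiodicity follows from $P^W_{ii}=1-\sum_{(i,k)\in\cE}P^W_{ik}\ge 1-D_i/(D_i+1)>0$, so the chain has a positive self-loop at every state. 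Thus $P$ admits a unique stationary distribution $\pi=(1/N)\bone$.

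By the strong ergodic theorem for finite irreducible aperiodic Markov chains, for each fixed $i\in\cV$,
\[
\lim_{k\rainfty}\frac{N_i(k)}{k+1}=\pi_i=\frac{1}{N}\quad\text{with probability one.}
\]
Because $\cV$ is finite, the intersection over $i\in\cV$ of these probability-one events still has probability one, so on a single event of full probability the convergence is uniform in $i$: for every $\varepsilon>0$ there exists a random $K$ such that $|N_i(k)/(k+1)-1/N|<\varepsilon$ for all $k\ge K$ and all $i\in\cV$ simultaneously. Substituting into the displayed expression for $\alpha(k)$ above and using $\sum_i\chi_i(s(k))=1$ gives $|\alpha(k)-1/N|<\varepsilon$ for all $k\ge K$, which yields $\lim_{k\rainfty}\alpha(k)=1/N$ almost surely, as claimed.

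The only delicate point is that the indicator in $\alpha(k)$ is evaluated at the random, $k$-dependent state $s(k)$ rather than at a fixed state; that is precisely why the uniform-in-$i$ version of the ergodic theorem is needed, and it is supplied for free by the finiteness of $\cV$. Everything else is a direct application of the Perron/Markov-chain facts already implicit in Definition~\ref{assumption1} and Assumption~\ref{strg}.
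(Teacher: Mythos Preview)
Your proof is correct and follows essentially the same approach as the paper: both identify $\alpha(k)$ as an empirical visit frequency and invoke the Ergodic Theorem for the irreducible, aperiodic chain $s(k)$ with uniform stationary distribution $\pi=(1/N)\bone$. The paper's argument is terser---it simply assumes $s(0)$ is uniform (so the chain is stationary) and cites the Ergodic Theorem directly---whereas you spell out the verification that $P$ is doubly stochastic and, more usefully, make explicit the uniformity-over-$i$ step needed because the indicator in $\alpha(k)$ is evaluated at the random state $s(k)$, a point the paper leaves implicit.
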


\begin{proof} Without loss of generality, we assume that
$s(0)$ has a uniform distribution over $\cV$. Then,  $s(k)$ is an ergodic and irreducible process. By the Ergodic Theorem~\cite{norris1998markov}, it follows that
\bee
\lim_{k\rainfty}\alpha(k)=1/N,
\ene
where the equality is due to the uniform distribution of $s(j)$ for all $j\ge 0$.
\end{proof}

\begin{rem} [Exponential Convergence]
By Theorem~\ref{thm_convergence}, once $\alpha(k)$ is close to $1/N$, the convergence becomes exponential.
\end{rem}
% For $0<p<2$, let $q$ be a positive constant satisfying that ${1}/{2}+{1}/{q}=1/p$. It follows from (\ref{error}) and the Holder inequality \cite{horn1985ma} that
%\beq
%\twon{e(k+1)}_p &\le& \twon{\Phi(k+1,0)}_2\twon{e(0)}_q\label{error_norm}\\
%&&+\sum_{j=1}^k \twon{\Phi(k,j)}_2\twon{(\alpha(k)-1/N)e(k)}_q\nonumber.
%\enq
%
% Note that $|\alpha(k)-1/N|<1+1/N$, it follows from the dominated convergence theorem \cite{ash2000pam} and Lemma \ref{lem_ergodic} that $\bE[\twon{\alpha(k)-1/N}^q]=0$ for any $q>0$.
%Since $e(k)$ is uniformly bounded, we further obtain that $ \twon{(\alpha(k)-1/N)e(k)}_q=0$. Moreover, it follows from (\ref{convergencerate}) that
%$$\sum_{j=1}^k\twon{\Phi(k,j)}_2\le M^{1/2}\sum_{j=1}^k(\rho^{1/2})^{k-j}<\frac{M^{1/2}}{1-\rho^{1/2}}.$$
%
%Again, it follows from (\ref{error_norm}) and Toeplitz's Lemma \cite{ash2000pam} that $\lim_{k\rainfty}\twon{e(k)}_p=0$, which immediately implies that $\lim_{k\rainfty}\bE[\twon{x(k)-x}^{p}]=0$.\qed

\begin{rem}[Estimation of the Network Size]
Clearly, the network size $N$ is a global information and is unknown to an individual node.
Distributed estimation of the network size has been addressed in e.g. \cite{varagnolo2014distributed} using various techniques. In fact, the diffusion algorithm of this section can be simply used to locally estimate the network size by using the reverse of the stepsize, i.e., $\lim_{k\rainfty}\alpha^{-1}(k)=N$ with probability one.
\end{rem}

\subsection{Relations to the State-of-the-art}
In \cite{ishii2014PageRank,ishii2010distributed},  the PageRank problem is solved by designing the so-called distributed link matrices. Specifically, every node $i$ is associated to a link matrix $A_i$, whose $i$-th column coincides with the $i$-th column of $W$ of this paper. The distributed update scheme with randomization is of the form
\beq
y(k+1)&=&(1-m)A_{s(k)}y(k)+\frac{m}{N}\cdot {\bf 1},\label{pagerankdis} \\
 x(k+1)&=&\frac{1}{k+1}y(k+1)+\frac{k}{k+1}x(k),\label{pagerankiishi}
\enq
where the initial condition $y(0)=x(0)$ is chosen as any probability vector. Under the assumption that $s(k)$ is an i.i.d. process with a uniform distribution and $m\in(0, 1)$, it is proved that $\lim_{t\rainfty}\bE[\twon{x(t)-x}^2]=0$ with a linear convergence rate in \cite{ishii2010distributed} and  $\lim_{t\rainfty}x(t)=x$ almost surely in \cite{zhao2013convergence,lei2015distributed}.

In comparison, the algorithms of this paper are derived via an optimization approach, which incrementally improves the total cost under Markovian randomization.
While the PageRank algorithm in (\ref{pagerankiishi}) is motivated from the distributed implementation viewpoint, its convergence proof  is much more involved even for i.i.d. randomization. For instance, almost sure convergence in \cite{zhao2013convergence}  depends on a stochastic approximation algorithm with expanding truncation, while our algorithm converges in the sense of $L^p$ for any $p>0$.

From the implementation point of view, the computation of (\ref{pagerankdis}) requires knowledge of the network size $N$, and the initialization $y(0)$ should be a probability vector, which is not easy to obtain with unknown network size. Finally, the PageRank problem in temporal networks with time-varying links can be addressed using the techniques provided in this paper, as discussed in the next section.

\section{The PageRank Problem for Temporal Networks with Time-Varying Links}
\label{sec_tvg}
In this section, we generalize the randomized incremental algorithms to the PageRank problem of time-varying graphs. In comparison with static graphs, we cannot simply replace $H_{s(k)}$ by $\overline{H}_{s(k)}$ in (\ref{diffusion}) due to the causality constraints, which results in the unavailability of $\overline{H}$ at time $k$. By encoding each node with a processor to record its hyperlinks, a natural way to attack this problem is to use its estimate. Thus, we obtain the following revised incremental algorithm
\bee
x(k+1)=x(k)+\alpha(k)\overline{H}_{s(k)}(k)^T(y_{s(k)}-\overline{H}_{s(k)}(k)x(k)),\label{diffusion2}
\ene
where $\overline{H}(k)$ is the estimate of $\overline{H}$ at time $k$, and is recursively computed by
\beq
\overline{H}(k)&=&\frac{\varrho^k}{1+\cdots+\varrho^k}\sum_{t=1}^k \varrho^{-t}  \cdot H(t)\nonumber\\
&=&\overline{H}(k-1)+\frac{1}{1+\cdots+\varrho^k}[W(k)-\overline{H}(k-1)].\nonumber
\enq
By (\ref{perhylkm}), it is clear that $\lim_{k\rainfty}\overline{H}(k)=\overline{H}$. Note that the algorithm in (\ref{diffusion2}) can be distributedly implemented as in (\ref{condensed}) with the same approach previously explained.

For temporal networks with time-varying links, another problem is how to appropriately define the transition probability matrix of the Markovian process $s(k)$. Obviously, the transition probability matrix, which characterizes the random surfer browsing behavior, needs to be adapted to temporal networks. Following
the approach used for static graphs, the Definition \ref{assumption1} is revised as follows.

\begin{defi}\label{assumption2}[Transition Probability Matrix for Time-Varying Graphs]
The incremental process $s(k)$ is a Markov chain with a transition probability matrix $P(k)=(1-\omega)P^{W(k)}+\frac{\omega}{N}\bone\bone^T$ where $\omega\in[0,1]$ and $P^{W(k)}$ is given by
\bee
P^{W(k)}_{ij}=
\left\{
\begin{array}{ll}
\min\{\frac{1}{D_i(k)+1},\frac{1}{D_j(k)+1}\},&~\text{if}~(i,j)\in\cE(k),\\
1-\sum_{(i,u)\in\cE(k)}P^{W(k)}_{iu},&~\text{if}~i=j,\\
0&\text{otherwise.}
\end{array}
\right.\nonumber
\ene
\end{defi}

Similar to Definition \ref{assumption1}, $P(k)$ exploits the behavior of a random surfer when browsing webpages at time $k$ in time-varying graphs, and the convex combination parameter $\omega\in[0,1]$ represents the two modes of randomization (centralized and distributed) previously described. However, the transition probability matrix is constant for fixed graphs, which implies that $s(k)$ is a homogenous Markov chain. This fact does not hold for temporal networks. In this case, we are dealing with a time-heterogeneous Markov process, which usually requires a much more involved analysis than the time-homogeneous one. However, Lemmas~\ref{lem_trans} and \ref{lem_ergodic} still hold and the convergence results can be proved under the following assumption.
\begin{assum}[Joint Strong Connectivity]\label{strgvarying} For the distributed randomization, i.e., $\omega=0$, we assume that there exists an integer $Q\ge 1$ such that the joint graph $(\cV, \cup_{l=k}^{k+Q-1} \cE(l))$ of the webpages is strongly connected for all $k$.
\end{assum}
The joint connectivity assumption ensures that every pair of pages are reachable from each other in finite time. We now state the main convergence result of the randomized incremental algorithm
(\ref{diffusion}) for temporal networks.

\begin{thm}[Convergence for Temporal Networks]\label{thm_convergence2}
Under Assumption \ref{strgvarying} on joint strong connectivity, the randomized incremental algorithm
(\ref{diffusion}) with a transition probability matrix given in Definition \ref{assumption2}  enjoys the following properties:
\begin{enumerate}
\renewcommand{\labelenumi}{\rm(\alph{enumi})}
\item (Almost sure convergence) $\lim_{k\rainfty}x(k)=\overline{x}^{ls}$ with probability one.
\item ($L^p$ convergence) $\lim_{k\rainfty}\twon{x(k)-\overline{x}^{ls}}_{L^p}=0$ for any $p>0$.
 \end{enumerate}
\end{thm}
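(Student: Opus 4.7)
The plan is to reduce the time-varying analysis to a perturbed version of the static-graph arguments behind Theorems~1 and 2. First, applying the analog of Lemma~1 to the persistent hyperlink matrix gives $(I-(1-m)\overline{W})\overline{x}^{ls} = (m/N)\bone$, so the least squares residual vanishes and $y_{s(k)} = \overline{H}_{s(k)}\overline{x}^{ls}$. Substituting this identity into (\ref{diffusion2}) and setting $e(k) = x(k) - \overline{x}^{ls}$, the recursion becomes
$$e(k+1) = \bigl(I - \alpha(k)\,\overline{H}_{s(k)}(k)^T \overline{H}_{s(k)}(k)\bigr)\, e(k) + r(k),$$
with residual $r(k) = \alpha(k)\,\overline{H}_{s(k)}(k)^T \bigl(\overline{H}_{s(k)} - \overline{H}_{s(k)}(k)\bigr)\overline{x}^{ls}$. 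Since $\overline{H}(k) \rightarrow \overline{H}$ deterministically by (\ref{perhylkm}), and $\alpha(k)$ is uniformly bounded with $\alpha(k)\rightarrow 1/N$ almost surely by the analog of Lemma~4, we have $r(k) \rightarrow 0$ almost surely and in every $L^p$, $p \ge 1$.

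Iterating yields $e(k) = \Phi(k,0)\,e(0) + \sum_{j=0}^{k-1} \Phi(k,j+1)\,r(j)$, where $\Phi(\cdot,\cdot)$ is the transition matrix for $A(l) = I - \alpha(l)\,\overline{H}_{s(l)}(l)^T \overline{H}_{s(l)}(l)$. The main technical step is to establish the time-heterogeneous analog of Lemma~3, namely uniform geometric decay $\|\Phi(k,j)\|_{L^2} \le M\rho^{k-j}$ and an almost sure block contraction. Joint strong connectivity in Assumption~3 guarantees that over any window of length $Q$ the conditional expectation $\bE\bigl[\sum_{l=k}^{k+Q-1} \overline{H}_{s(l)}^T \overline{H}_{s(l)} \mid \cF_k\bigr]$ is uniformly lower bounded by a positive multiple of $\sum_{i=1}^N \overline{H}_i^T \overline{H}_i$, which is positive definite by the persistent-graph version of Lemma~2. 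This uniform excitation drives the $\phi$-mixing and block-product contraction argument used in the Appendix proof of Lemma~3, while the convergence $\overline{H}(k)\rightarrow\overline{H}$ and $\alpha(k)\rightarrow 1/N$ shows that after a finite burn-in the factors $A(l)$ are arbitrarily close to their persistent counterparts.

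Given the transition-matrix bound, I would then split the Duhamel-type sum $\sum_{j=0}^{k-1} \Phi(k,j+1)\,r(j)$ at $j = \lfloor k/2 \rfloor$. The head $j < k/2$ is geometrically suppressed by $\rho^{k/2}\sum_j \|r(j)\|$, while the tail is bounded by $\bigl(\sup_{j \ge k/2}\|r(j)\|\bigr)\cdot \sum_{j\ge 0} M\rho^j$, which vanishes because $r(j)\rightarrow 0$. Combined with the decay of $\Phi(k,0)\,e(0)$, this produces the almost sure convergence $e(k)\rightarrow 0$ in part~(a); replacing norms by $L^p$-norms, using the $L^p$ bound on $\Phi(k,j)$, and invoking dominated convergence yields part~(b).

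The hard part will be the rigorous time-heterogeneous extension of Lemma~3. In the static case the Ergodic Theorem supplies the invariant limit (\ref{meanh}), which anchors the $\phi$-mixing argument. When $P(k)$ varies with $k$ the chain is only weakly ergodic, so joint strong connectivity must be exploited to obtain a uniform-in-$k$ positive lower bound on the averaged excitation over windows of length $Q$, and the time-invariant product of conditional expectations must be replaced by a chained version appropriate to time-inhomogeneous kernels. Once this uniform lower bound is secured, the remaining pieces (coupling $\overline{H}(k)$ to its limit and passing from mean-square to almost sure contraction via block iterations) are straightforward extensions of the static-case arguments.
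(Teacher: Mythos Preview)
Your overall strategy—reduce to a perturbed version of the static analysis via a Duhamel representation and a vanishing residual—matches the paper's, but you make one structural choice differently. You keep the time-varying estimate $\overline{H}_{s(k)}(k)$ in the linear part and push only the mismatch $(\overline{H}_{s(k)}-\overline{H}_{s(k)}(k))\overline{x}^{ls}$ into the residual, so that $r(k)$ is state-independent and tends to zero immediately. The paper instead places the \emph{limit} row $\overline{H}_{s(k)}$ in the linear part, writing $\overline{e}(k+1)=(I-\alpha(k)\overline{H}_{s(k)}^T\overline{H}_{s(k)})\overline{e}(k)+\zeta(k)$; this makes the transition matrix literally the one from Corollary~1 (with $H$ replaced by $\overline{H}$), so no new contraction analysis is needed, at the price that $\zeta(k)$ carries an $e(k)$-dependent term and one must first argue uniform boundedness of $\overline{e}(k)$ before concluding $\zeta(k)\to 0$. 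Both decompositions work; yours gives a cleaner residual, the paper's recycles the static contraction lemma verbatim. Your head/tail splitting of the Duhamel sum is just the Toeplitz-lemma argument the paper invokes via Corollary~1.

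Where your proposal is genuinely imprecise is the time-heterogeneous ergodicity step. The claim that joint strong connectivity over windows of length $Q$ makes $\bE\bigl[\sum_{l=k}^{k+Q-1}\overline{H}_{s(l)}^T\overline{H}_{s(l)}\mid\cF_k\bigr]$ dominate a positive multiple of $\sum_i\overline{H}_i^T\overline{H}_i$ is false in the distributed case $\omega=0$: Assumption~3 only says the \emph{union} graph over $[k,k+Q-1]$ is strongly connected, not that every node is reachable from $s(k)$ in $Q$ transitions of the inhomogeneous chain, so conditioning on $\cF_k$ the sum can miss most $\overline{H}_i$'s entirely. The excitation window must be much longer than $Q$, and even then uniformity in $k$ is not automatic. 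The paper handles this by an ingredient you do not mention: since the node set is fixed and edges are binary, $P(l)$ takes only \emph{finitely many} values; Lemma~1 and Theorem~3 of Jadbabaie et al.\ then give that all $Q$-fold products $\prod_{l=k}^{k+Q-1}P(l)$ are ergodic, and combined with double stochasticity this forces the law of $s(k)$ to converge exponentially to the uniform distribution. That is the surrogate for the Ergodic Theorem which makes Lemmas~3 and~4 (hence also your ``analog of Lemma~4'' for $\alpha(k)\to 1/N$) go through in the time-varying setting. Once you insert this finiteness argument in place of the window-$Q$ excitation claim, the rest of your outline is correct.
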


\begin{proof} The proof is similar to that of Theorem~\ref{thm_convergence}, but  we need to re-elaborate  Lemma~\ref{lem_ergodic} for temporal networks.

Under Assumption \ref{strgvarying}, it follows from Lemma~1 in \cite{jadbabaie2003coordination} that
$$\widehat{P}(k)=\prod_{l=k}^{k+Q-1}P(l)$$ is ergodic for any $k\in\bN$.
Since the network size of webpages is assumed to be fixed, $P(l)$ can only take a finite number of values, i.e., $P(l)\in\cP$ where $\cP$ is a finite set containing all possible values of $P(l)$.  This implies that $\widehat{P}(k)$ takes a finite number of values as well. By Theorem~3 in \cite{jadbabaie2003coordination}, we further obtain that the product of any finite number of $\widehat{P}(k)$ is ergodic, i.e.,
 $\prod_{j=1}^m \widehat{P}(k_j)$ is ergodic as well for any finite $m$. Together with Theorem~2 in \cite{coppersmitha2008conditions} and the double stochasticity of $\widehat{P}(k)$, it follows that the distribution of $s(k)$ converges exponentially to a uniform distribution over the set $\cV$. This implies that Lemmas~\ref{lem_trans} and \ref{lem_ergodic}  will hold under Definition~\ref{assumption2} and Assumption~\ref{strgvarying}.  Then, we can easily establish the convergence for the incremental algorithm (\ref{diffusion2}) as for static graphs. Specifically, let
\beq
\zeta(k)=\alpha(k)\overline{H}_{s(k)}(k)^T(y_{s(k)}-\overline{H}_{s(k)}(k)x(k)).\nonumber
\enq This implies that
\bee
\overline{e}(k+1)=(I-\alpha(k)\cdot \overline{H}_{s(k)}^T\overline{H}_{s(k)})\overline{e}(k)+\zeta(k),
\ene
where $\overline{e}(k)=x(k)-\overline{x}^{ls}$.
Similarly, it can be shown that $\overline{e}(k)$ is uniformly bounded. Together with (\ref{perhylkm}) and Lemma~\ref{lem_ergodic}, it follows that $\lim_{k\rainfty}\zeta(k)=0$. Finally, the proof is completed by using Corollary \ref{cor_bound} in the Appendix. \end{proof}

\section{Simulations}
\label{sec_simulation}
In this section, we report simulation results on the distributed computation of betweenness centrality over a directed tree, and the degree, closeness centralities and PageRank over a randomly generated graph.
\subsection{Degree, Closeness, and PageRank Computation}
 \textcolor[rgb]{0,0,1}{We consider two types of graph models:  random graph model and scale-free network\cite{newman2010networks}, both of which are undirected with 50 nodes.  The connections between two nodes of the random graph are denoted by a dot, and the probability of a connection is selected as one half, see Fig.~\ref{fig_rangraph}. For this graph, the normalized degree, closeness and eigenvector (PageRank) centralities are shown in Fig.~\ref{fig_rangraph}.  In the PageRank computation, we test the distributed algorithm in (\ref{condensed}) on this randomly generated graph. As shown in Fig.~\ref{fig_stepsize}, the inverse of the stepsize $\alpha(k)$ of each node indeed converges to the network size. That is, the network size can be distributedly estimated by each node. Similar results can be observed in the scale-free network, see Fig.~\ref{fig_scale}.}

\begin{figure}
\centering
\includegraphics[width=4cm, height=3.85cm]{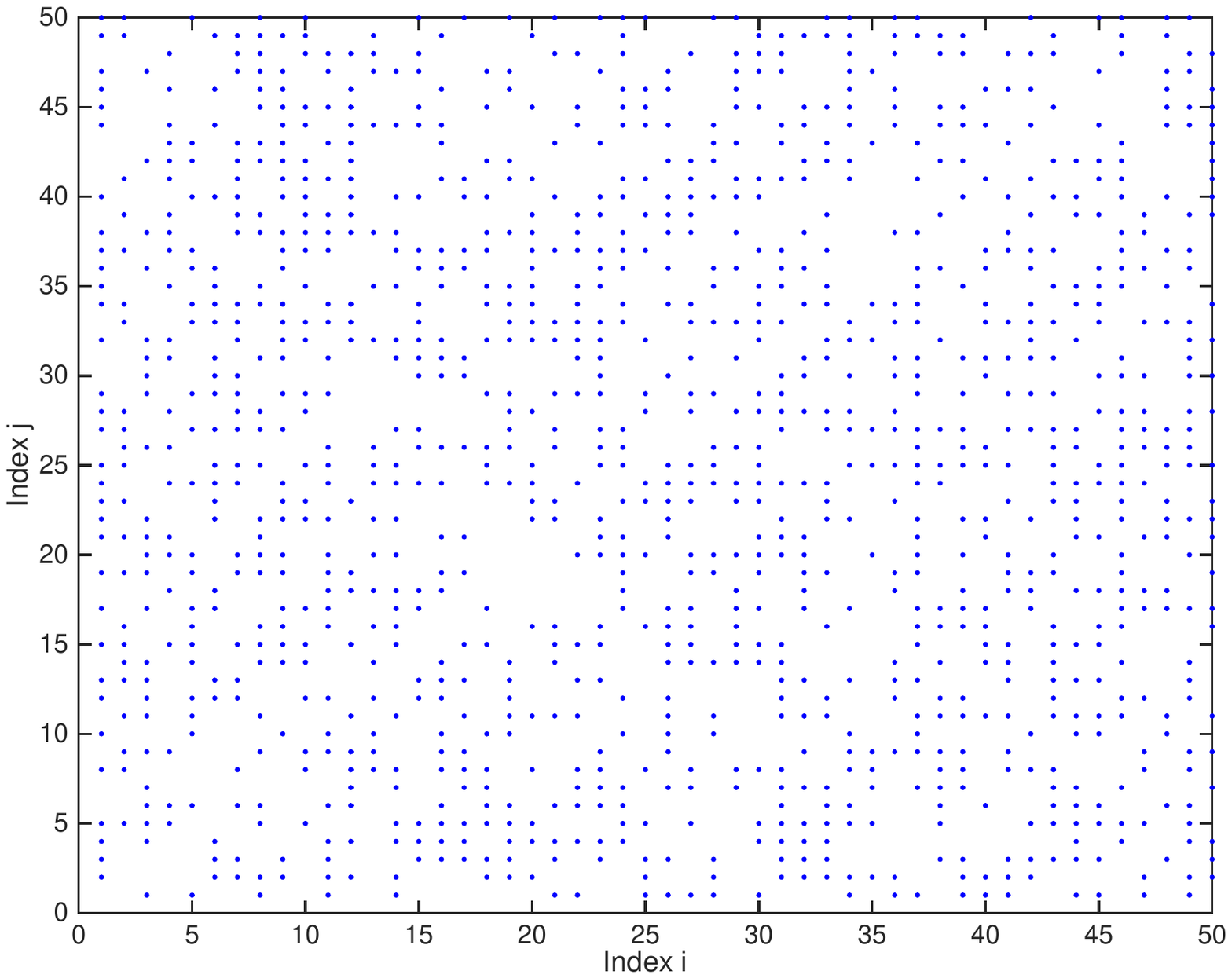}  \includegraphics[width=4.5cm, height=4cm]{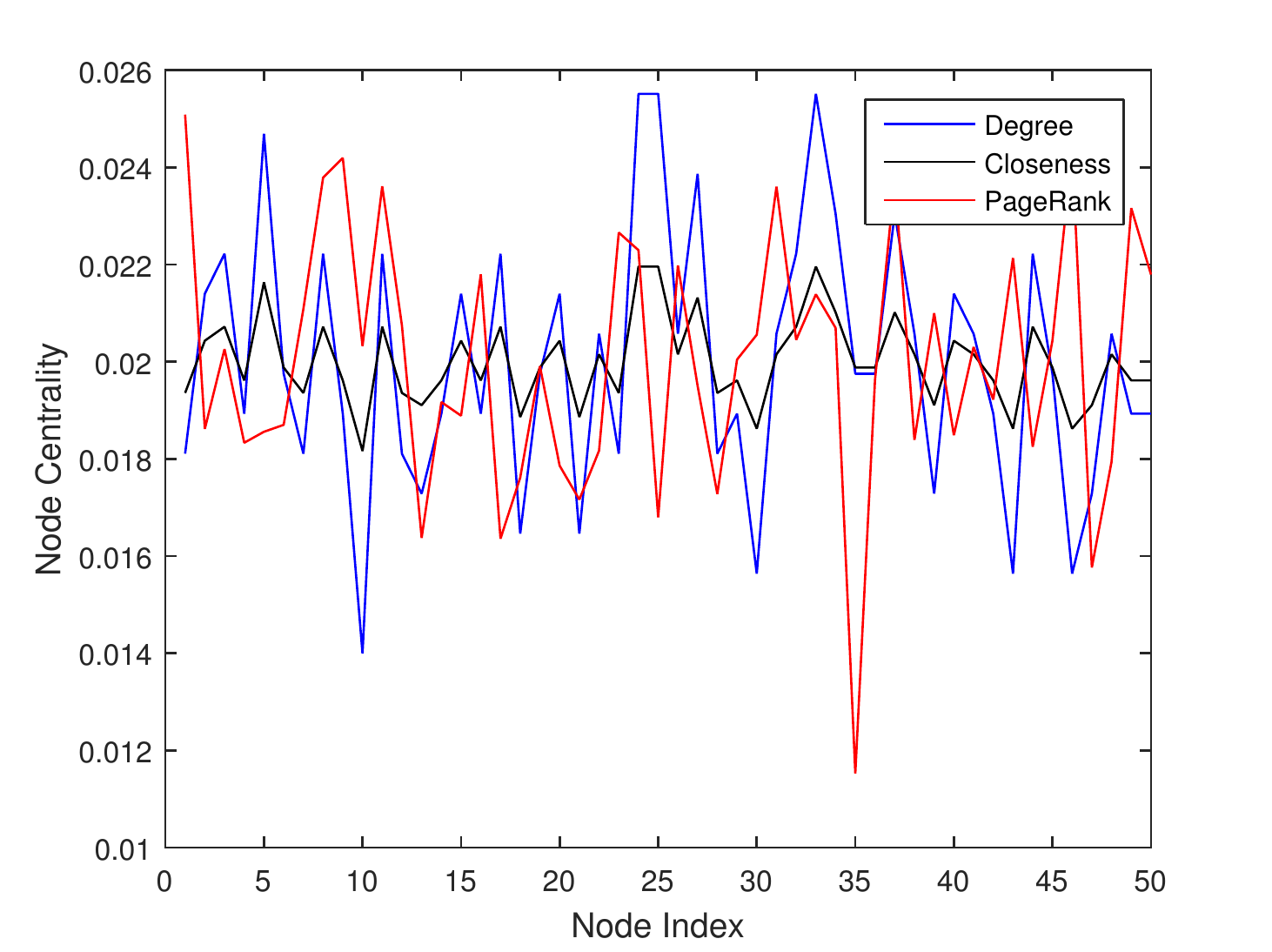}
  \caption{Random graph and its centralities.}
  \label{fig_rangraph}
\end{figure}

\begin{figure}
\centering
  \includegraphics[width=5cm, height=4cm]{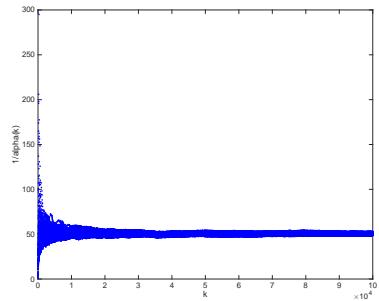}
  \caption{Stepsize.}
  \label{fig_stepsize}
\end{figure}

 \begin{figure}
\centering
\includegraphics[width=4cm, height=3.85cm]{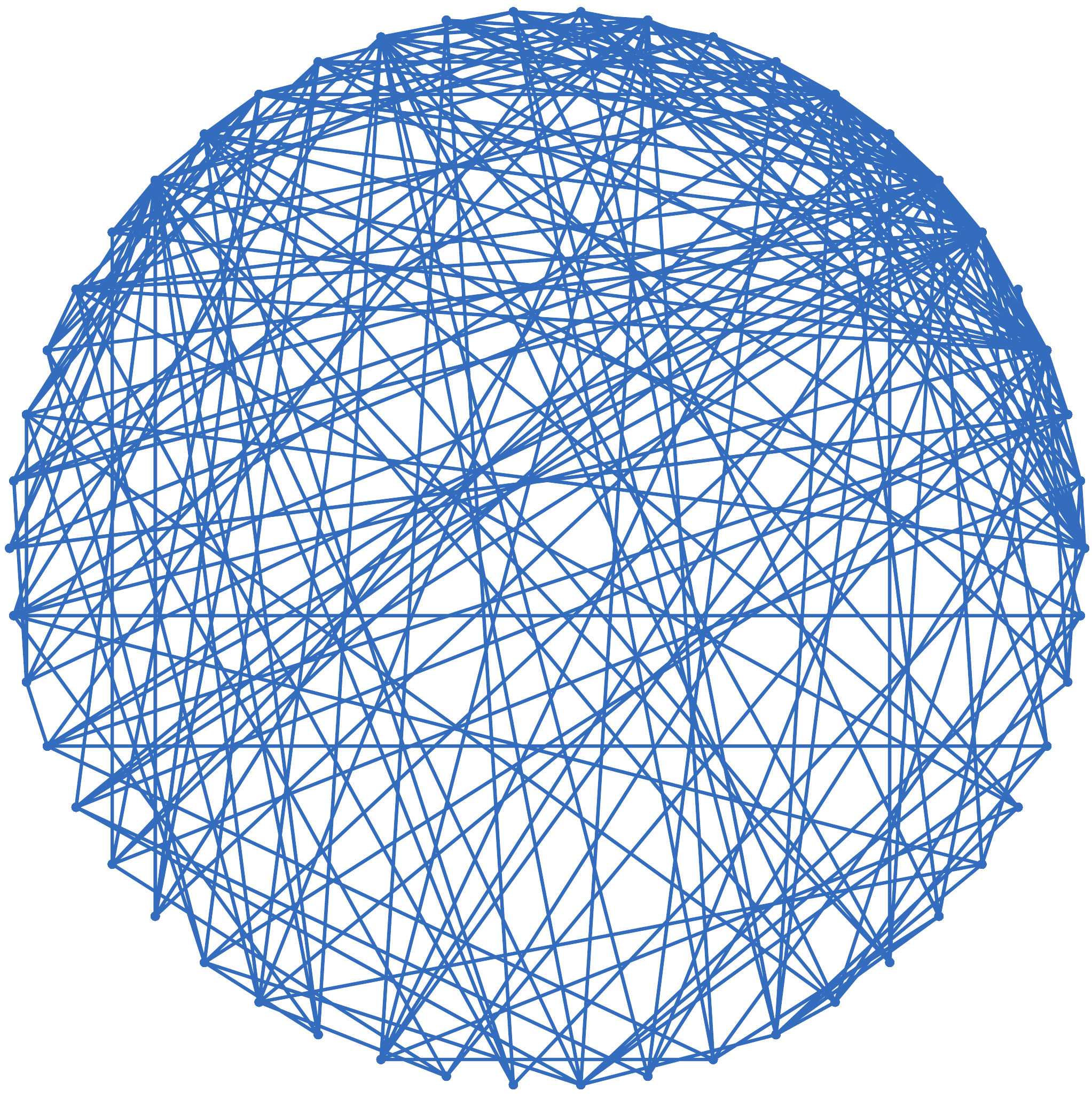}  \includegraphics[width=4.5cm, height=4cm]{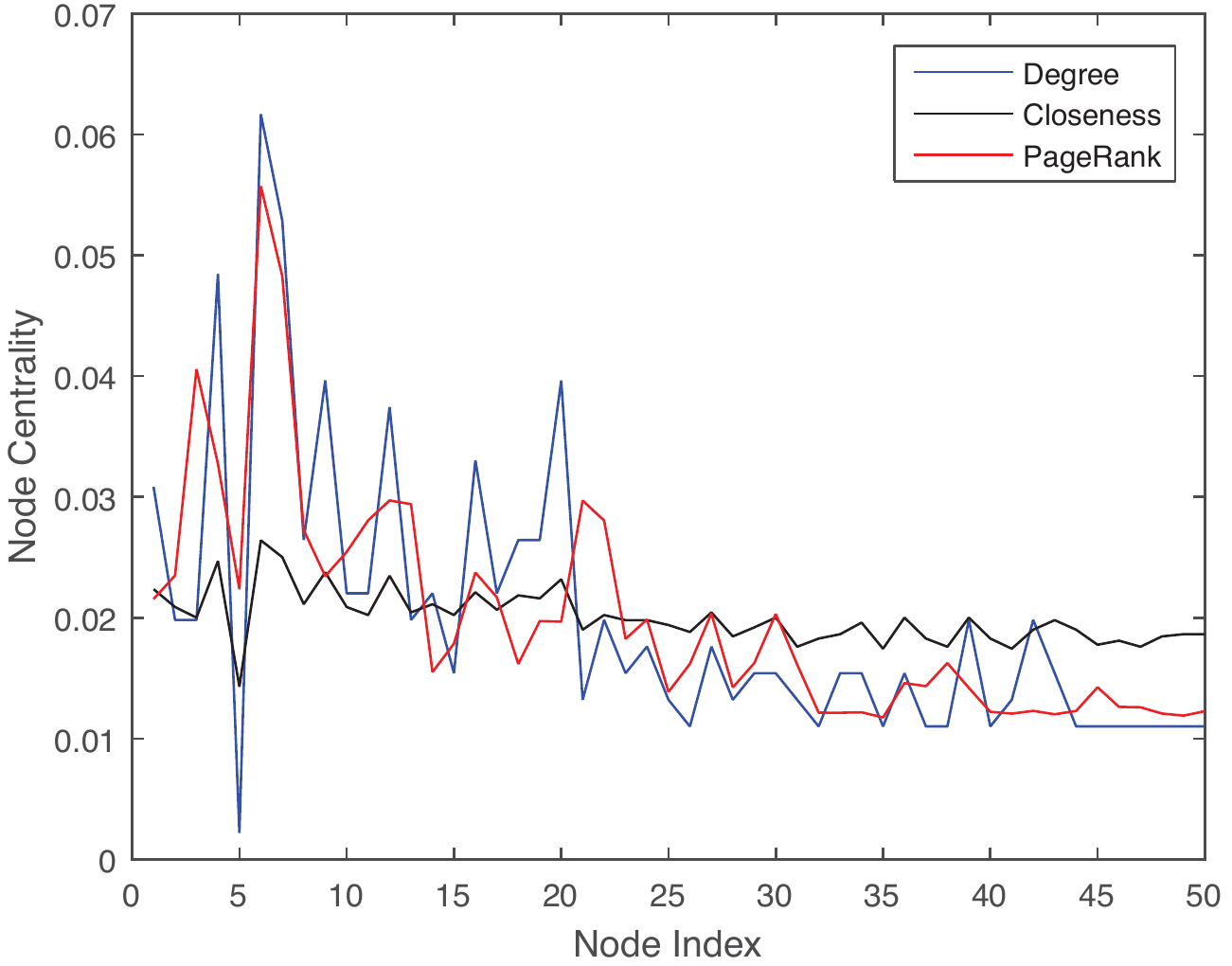}
  \caption{Scale-free network and its centralities.}
  \label{fig_scale}
\end{figure}
\subsection{Betweenness Computation Over a Directed Tree}
 We perform the distributed computation of betweenness centrality via a directed tree in Fig.~\ref{fig_tree}.
\begin{figure}
\centering
\includegraphics[width=4cm, height=4cm]{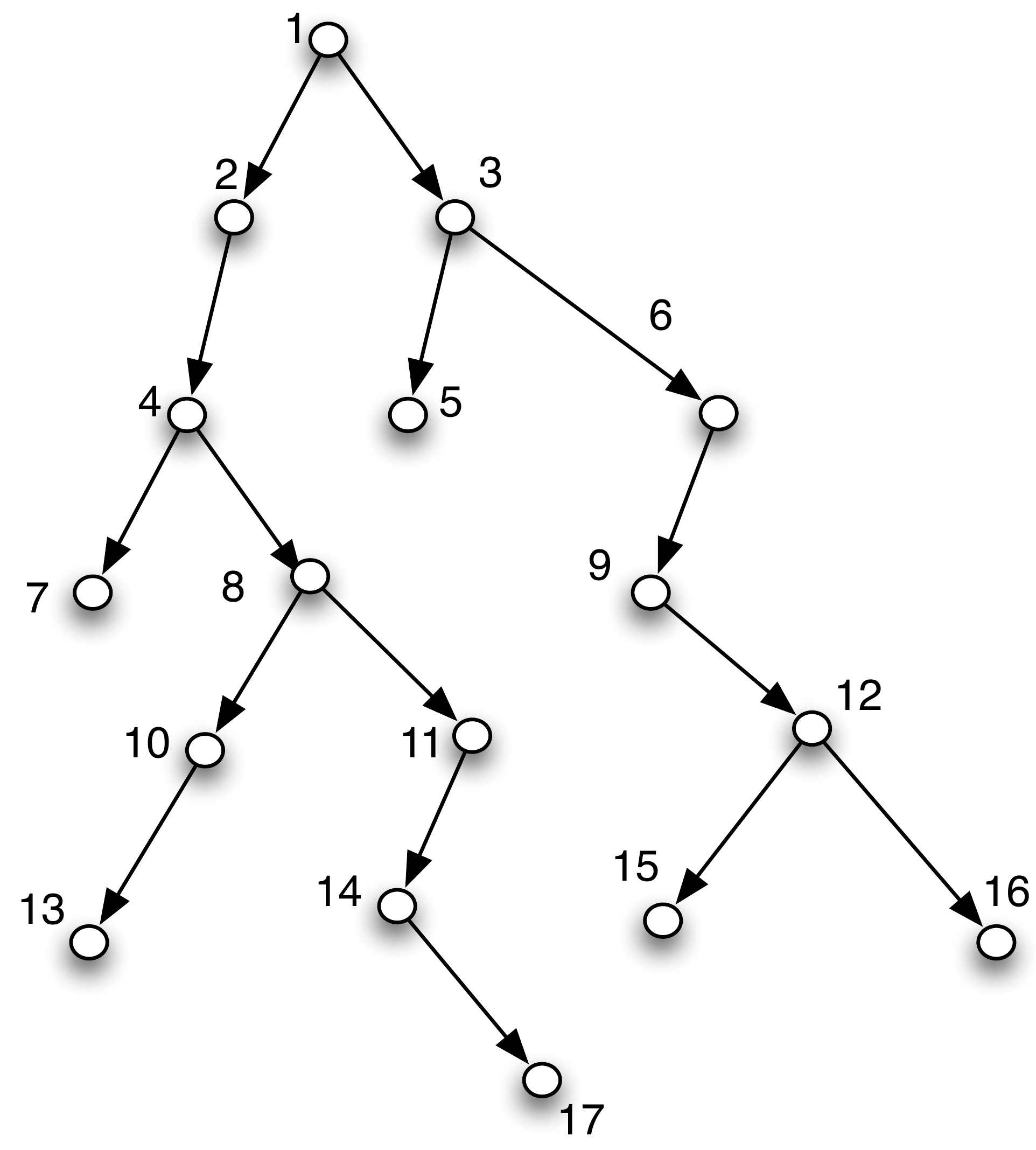}\includegraphics[width=4cm, height=4cm]{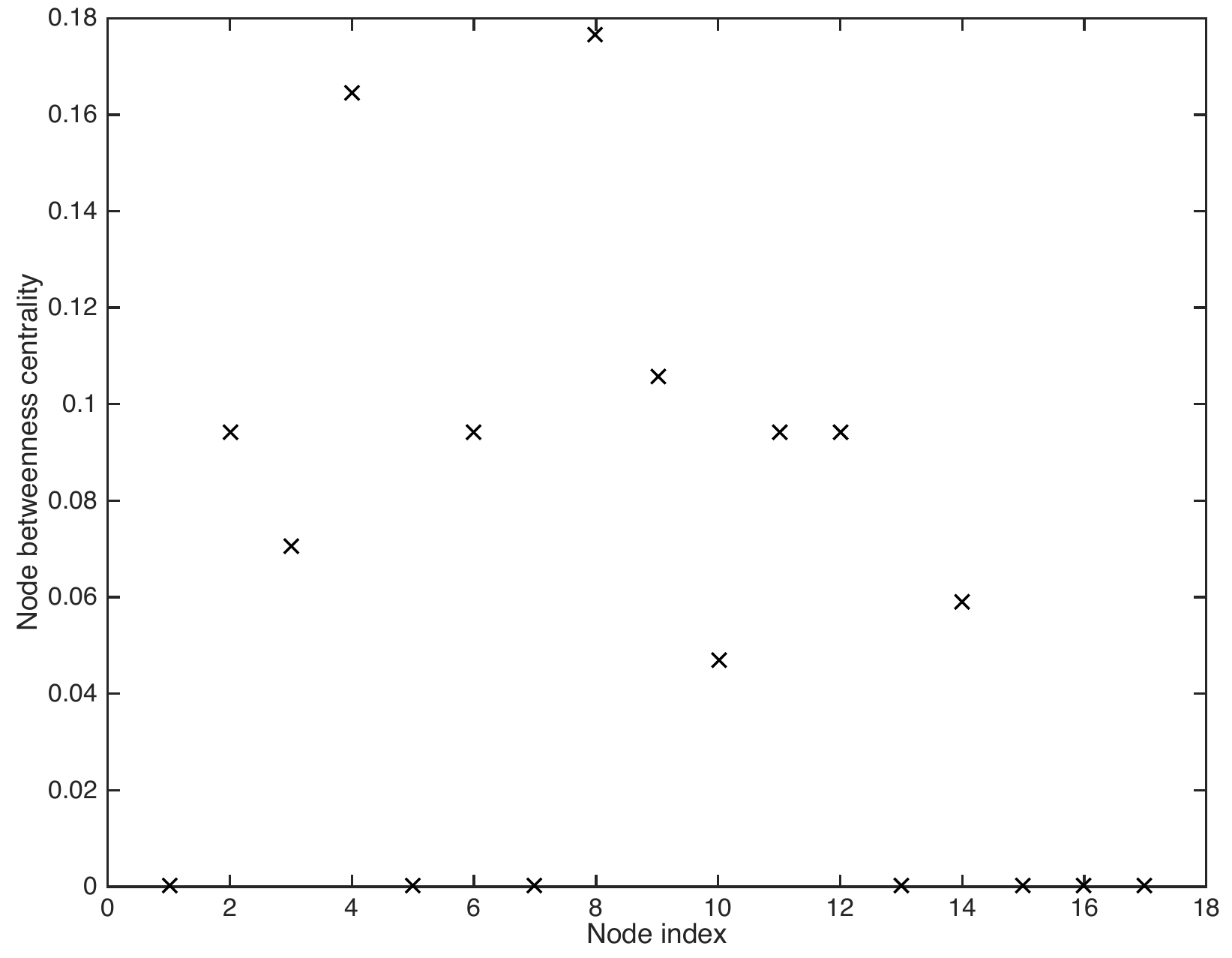}
  \caption{Directed tree and the betweenness centralities.}
  \label{fig_tree}
\end{figure}
The normalized betweenness centralities are obtained in a finite time and are illustrated in Fig.~\ref{fig_tree}, where node 8 has the largest betweenness centrality. It is consistent with our intuition that this node controls the largest number of communications between other nodes.
\subsection{PageRank Computation Over Real Web Data}
The Web data utilized in the simulations has been obtained from the database \cite{scrg} collected by crawling Web pages of various universities. This database has been previously used as a benchmark for testing PageRank algorithms \cite{ishii2014PageRank}. We have selected the data from Lincoln University in New Zealand for the year 2006. This web has 3,756 nodes and 31,718 links corresponding to 684 domains. The largest is the main domain of the university (www.lincoln.ac.nz), consisting of 2,467 pages. Other larger domains in this dataset contained, for example, 221, 101, 68, 24 pages. In this Web, a fairly large portion of the nodes are dangling nodes, e.g., there are 3,255 dangling nodes (more than $85\%$ of the total). Furthermore, two nodes with no incoming links have been removed because their effects on the PageRank values are negligible. The pages were indexed according to the domain/directory names in alphabetic order.

To proceed with the PageRank computation, the Web was modified to obtain a column stochastic matrix. This modification was done adding back links to dangling nodes. The resulting Web had 40,646 links. For this Web, the PageRank values were calculated (for comparison purposes) by the power method. About 40 iterations were sufficient for its convergence. We also implemented the PageRank Algorithm~2 in (\ref{condensed}) with unknown Web size. The results are shown in Fig.~\ref{fig_rank} and are similar to those obtained in \cite{ishii2014PageRank}. In particular, we observe that the pages with higher PageRank values are those corresponding to two clusters (with page indices around 500 and 2,500) where many pages are linked to each other. The top two pages of PageRank values are in fact the ``search" pages of the Lincoln University while the main website of the university is ranked third.

Convergence of the algorithm is guaranteed by Theorem~\ref{thm_convergence1}. The algorithm  basically has exponential convergence (Remark ~8), contrary to the algorithms proposed in \cite{ishii2014PageRank} which are based on time-averaging and have linear convergence rate. Furthermore, the algorithm does not require a common clock (Remark~3) and the Webpage randomization may be of Markovian-type instead of i.i.d., see Section V and Remark~7 in particular.

\begin{figure}
\centering
\includegraphics[width=5cm, height=4cm]{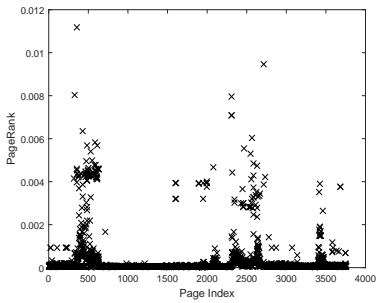}
  \caption{The PageRank of Web data.}
  \label{fig_rank}
\end{figure}

\section{Conclusion}
\label{sec_conclusion}
In this paper, we have studied the distributed computation for the degree, closeness, betweenness centrality measures and PageRank. In particular, we proposed  deterministic algorithms which converge in finite time for the degree, closeness, and betweenness centrality measures. For the PageRank problem, a randomized algorithm was devised to incrementally compute the PageRank. Different from the existing literature, this algorithm does not require to know the network size, which is typically difficult to obtain in a distributed way, and can be asymptotically estimated by each node. Extensive simulations using a classical benchmark were included to illustrate the theoretical results. Future work will be focused on extensions of this approach to study
other important applications in networks, including clock
synchronization in wireless networks \cite{freris2010fundamentals} and opinion dynamics
in social networks \cite{friedkin1999social}.

\section*{Appendix}
By using Definition 1 and Assumption 1, the random process $s(k)$ exponentially converges to a uniform stationary distribution over the set $\cV$. Without loss of generality, in this Appendix, we assume that the Markov process $s(k)$ %$\{s(k)\}$
already starts from the uniform distribution.

\subsection{Proof of Lemma~\ref{lem_trans}}
Before giving the proof, let $\cS_p(\lambda)$ be a set of random sequences $X=\{X_k\}$  with $\lambda\in[0,1)$ and $p\ge 1$ as
\bee
\cS_p(\lambda):=\left\{X: \twon{\prod_{j=i}^k(I-X_j)}_{L^p}\le M\lambda^{k-i}\right\}\label{exciting}
\ene
for some $M>0$, and for all $ k\ge i\ge 0$.

\begin{proof} (a) Under  Assumption \ref{assumption1}, it is easy to verify that $s(k)$ is an ergodic and irreducible process. In addition, $s(k)$ is stationary. This immediately implies that it is also a $\varphi$-mixing process \cite{billingsley1999convergence}.   By (\ref{pagerank}) and (\ref{leastsquare}), it
follows that
\beq
H_iH_i^T &\le & e_ie_i^T+(1-m)^2
%\cdot
W_iW_i^T\nonumber\\
&=&1+(1-m)^2\sum_{j=1}^N w_{ij}^2\nonumber\\
&< & 1+ \sum_{j\in \cL_i^1} w_{ij} \nonumber\\
&\le& N, \label{norm}
\enq
which implies that $0\le 1/N \cdot H_{s(k)}^TH_{s(k)} \le I$.

Let $\cF_k$ be the $\sigma$-algebra generated by  events associated to random variables $\{s(0),\ldots,s(k)\}$ and $\lambda_{\min}(X)$ be the minimum eigenvalue of a positive semi-definite matrix $X$.
Since $s(k)$ has a uniform distribution over $\cV$, we obtain that
\bee
1/N\cdot \bE[ H_{s(k)}^TH_{s(k)}]=1/N\cdot \cH,
\ene
where $\cH$ is defined in (\ref{meanh}).

By Lemma~\ref{lem_definite}, all the eigenvalues of $1/N\cdot\bE[H_{s(k)}^TH_{s(k)}]$ are strictly positive.  In light of Theorem~2.3 in  \cite{guo1994stability}, there exists a $h>0$ and $\lambda \in(0, 1)$ such that $\{\lambda_k\}\in \cS_1(\lambda)$ where
$$\lambda_k:=\lambda_{\min}\left\{\bE\left[\frac{1}{(h+1)N}\sum_{i=kh+1}^{(k+1)h}H_{s(i)}^TH_{s(i)}|\cF_{kh}\right]\right\}$$
 and $\cS_1(\lambda)$ is defined in (\ref{exciting}). Together with Theorem~2.1 in  \cite{guo1994stability}, it follows that
$$\{\frac{1}{N}\cdot H_{s(k)}^TH_{s(k)} \}\in \cS_2(\lambda^\alpha)~\text{and}~ \alpha=\frac{1}{8h(1+h)^2}.$$
That is, there exists a positive $M>0$ and $\rho=\lambda^{2\alpha}\in(0,1)$ satisfying (\ref{convergencerate}), which completes the proof of part (a).

(b) For ease of notation, let $t=\lfloor\frac{k-j}{N}\rfloor$. Given any sufficiently large $k$, it follows from $\twon{AB}\le \twon{A}\twon{B}$ and $\twon{\Phi(k,j+tN)}\le 1$ that
\bee
 \twon{\Phi(k,j)}\le \prod_{i=1}^t \twon{\Phi(j+i \cdot N,j+(i-1)N} \nonumber
\ene
which immediately implies that
\bee\label{division}
\frac{1}{t}\cdot \log \twon{\Phi(k,j)} \le \frac{1}{t}\sum_{i=1}^t \log\twon{\Phi(j+i \cdot N,j+(i-1)N)}.
\ene

Since $s(k)$ is ergodic and stationary, it follows from the Ergodic Theorem~\cite{ash2000pam} that
\bee\label{ergodic}
\lim_{t\rainfty} \frac{1}{t}\sum_{i=1}^t \log\twon{\Phi(j+i \cdot N,j+(i-1)N)}=\bE[\log\twon{\Phi(N,0)}]
\ene
with probability one.

It is clear from Lemma~\ref{lem_definite} that $\cH$ is positive definite. Then $\text{span}\{H_1^T,\ldots, H_N^T\}=\bR^N$, i.e., $\{H_1^T,\ldots,H_N^T\}$ generates $\bR^N$.  \textcolor[rgb]{0,0,1}{Let $\twon{\Phi_0}=\twon{\prod_{i=1}^N (I-1/N \cdot H_i^TH_i)}$. Jointly with Lemma ~3.52 in \cite{costa2005dtm}, we obtain that
$\twon{\Phi_0}<1.$}
By the ergodicity and stationarity of $s(k)$, it follows that
$$
p_0:=\bP\{s(1)=1,\ldots,s(N)=N\}>0.
$$
Note that $\twon{\Phi(N,0)}\le 1$ under any realizations of $s(1),\ldots,s(N)$, we obtain that
\bee
\bE[\log\twon{\Phi(N,0)}] \le p_0 \log \twon{\Phi_0}<0.
\ene

  For any $\epsilon \in(0,  -p_0 \log \twon{\Phi_0})$,  it follows from (\ref{ergodic}) that there exists a sufficiently large $t_0>0$ and $t>t_0$ such that
\bee
\frac{1}{t}\sum_{i=1}^t \log\twon{\Phi(j+i \cdot N,j+(i-1)N)}\le p_0 \log \twon{\Phi_0}+\epsilon.
\ene
Let $\eta =\exp( p_0 \log \twon{\Phi_0}+\epsilon)<1$ and $k_0=t_0N$, it follows from (\ref{division}) that $
\twon{\Phi(k,j)}\le \eta^t $ if $k-j>k_0$.

(c)~For any sufficiently large $k$, let $t$ be the smallest integer such that $t\ge {k}/{k_1}-1$, i.e. $t=\lceil {k}/{k_1}-1\rceil$, where $k_1=k_0+1$. This implies that $k-tk_1< k_1$. Noting that $\twon{\Phi(k,j)}\le 1$, it follows from part (b) that
\beq
\sum_{j=0}^k\twon{\Phi(k,j)}&\le& \sum_{j=0}^{k-tk_1}\twon{\Phi(k,j)}+k_1\sum_{j=1}^t\twon{\Phi(k,k-jk_1)}\nonumber\\
&\le &k_1+k_1\sum_{j=1}^t \eta^{\lfloor\frac{jk_1}{N}\rfloor}\nonumber\\
&\le &k_1\left(1+\frac{1}{1-\eta}\right).  \nonumber
\enq

 \textcolor[rgb]{0,0,1}{That is, $\sum_{j=0}^k\twon{\Phi(k,j)}$ is uniformly upper bounded by $k_1(1+\frac{1}{1-\eta})$},
which completes the proof.
 \end{proof}
 \subsection{Proof of Theorem~\ref{thm_convergence1}}
 In light of Lemmas~\ref{lem_trans}-\ref{lem_ergodic}, we have the following result, which is central to the proof of Theorem~\ref{thm_convergence1}.

\begin{cor}\label{cor_bound}
Let the transition matrix be $\widetilde{\Phi}(k+1,j)=(I-\alpha(k) \cdot H_{s(k)}^TH_{s(k)})\widetilde{\Phi}(k,j), j\le k$ and $\widetilde{\Phi}(k,k)=I$.  With probability one, it holds that
\bee\label{bound_trans1}
\lim_{k\rainfty}\sum_{j=0}^k\twon{\widetilde{\Phi}(k,j)}<\infty.
\ene
\end{cor}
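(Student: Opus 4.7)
The plan is to reduce the time-varying-stepsize product $\widetilde{\Phi}(k,j)$ to the constant-stepsize product $\Phi(k,j)$ of Lemma~\ref{lem_trans}, using the fact from Lemma~\ref{lem_ergodic} that $\alpha(k)\ra 1/N$ with probability one. Fix a small $\delta\in(0,1/N)$. By Lemma~\ref{lem_ergodic} there exists an almost-surely finite (random) index $K$ such that $|\alpha(k)-1/N|\le \delta$ for every $k\ge K$. Combined with the bound $\twon{H_{s(k)}^TH_{s(k)}}\le N$ from (\ref{norm}) already exploited in the proof of Lemma~\ref{lem_trans}(a), this gives $0\le \alpha(k) H_{s(k)}^TH_{s(k)}\le 2I$ for $k\ge K$, so each corresponding factor is a contraction: $\twon{I-\alpha(k)H_{s(k)}^TH_{s(k)}}\le 1$.

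Next I would split $\sum_{j=0}^k\twon{\widetilde{\Phi}(k,j)}$ at $j=K$. For $j<K$, the semigroup identity $\widetilde{\Phi}(k,j)=\widetilde{\Phi}(k,K)\widetilde{\Phi}(K,j)$ combined with the contractivity above yields the pointwise bound $\sum_{j=0}^{K-1}\twon{\widetilde{\Phi}(k,j)}\le \sum_{j=0}^{K-1}\twon{\widetilde{\Phi}(K,j)}$ for all $k\ge K$. The right-hand side is a finite random variable (only $K$ terms, and since $\alpha(k)\in[0,1]$ always, each factor in $\widetilde{\Phi}(K,j)$ has norm bounded by a deterministic constant depending only on $N$), so this part of the sum is uniformly bounded in $k$ almost surely.

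For the tail $\sum_{j=K}^k\twon{\widetilde{\Phi}(k,j)}$ I would mirror the argument for Lemma~\ref{lem_trans}(b)-(c). Group indices into blocks of length $N$ starting from $K$: on the positive-probability event $\{s(K+\ell N+1)=1,\ldots,s(K+(\ell+1)N)=N\}$, which by stationarity has probability $p_0>0$, the block product equals $\prod_{i=1}^N\bigl(I-\alpha(K+\ell N+i-1)H_i^TH_i\bigr)$. As $\ell\rainfty$ this converges in norm to $\Phi_0=\prod_{i=1}^N(I-(1/N)H_i^TH_i)$ of Lemma~\ref{lem_trans}(b), and $\twon{\Phi_0}<1$ by Lemma~3.52 of \cite{costa2005dtm}. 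Consequently, for $\delta$ small enough and $\ell$ large enough, the block-product norm is bounded by some $\eta'\in(\twon{\Phi_0},1)$ on the above event and by $1$ otherwise. The logarithmic / Ergodic-Theorem analysis of (\ref{division})-(\ref{ergodic}) then yields constants $\eta\in(0,1)$ and $k_0'>0$ such that $\twon{\widetilde{\Phi}(k,j)}\le \eta^{\lfloor (k-j)/N\rfloor}$ whenever $k-j>k_0'$, and summing as at the end of Lemma~\ref{lem_trans}(c) bounds the tail by $k_0'\bigl(1+1/(1-\eta)\bigr)<\infty$.

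The hard part is the Ergodic-Theorem step: the block products are not stationary because $\alpha(k)$ varies with $k$, so the classical Ergodic Theorem used in (\ref{ergodic}) does not apply verbatim. The remedy is to exploit $\alpha(k)\ra 1/N$ almost surely together with continuity of the operator norm to compare the non-stationary block products to the stationary product $\Phi_0$ block-by-block, and then apply a Cesaro-type argument to conclude that the asymptotic Lyapunov exponent is at most $p_0\log\twon{\Phi_0}+\epsilon'<0$ for any preassigned small $\epsilon'>0$. Once this uniform exponential decay is secured, (\ref{bound_trans1}) follows exactly as in Lemma~\ref{lem_trans}(c).
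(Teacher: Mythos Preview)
Your approach is essentially the same as the paper's: both use Lemma~\ref{lem_ergodic} to pin $\alpha(k)$ near $1/N$ beyond some random index, obtain contractivity of the factors thereafter, split the sum at that index, and then replay the exponential-decay argument of Lemma~\ref{lem_trans}(b)--(c) on the tail. The paper is terser---it simply writes ``As in part (b) of Lemma~\ref{lem_trans}'' and asserts the bound $\twon{\widetilde{\Phi}(k,j)}\le \eta_1^{\lfloor(k-j)/N\rfloor}$ for $j$ large, without explicitly treating the non-stationarity you flag; your Ces\`aro comparison of the $\alpha(k)$-blocks to the $1/N$-blocks is a legitimate way to supply that detail. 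One minor note: the paper sharpens the factor bound to $0\le \alpha(k)H_{s(k)}^TH_{s(k)}\le I$ (using the strict inequality $H_iH_i^T<N$ from (\ref{norm})) rather than your $\le 2I$, but since $H_{s(k)}^TH_{s(k)}$ is symmetric either bound yields $\twon{I-\alpha(k)H_{s(k)}^TH_{s(k)}}\le 1$, so this does not affect the argument.
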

\begin{proof}
We first note from (\ref{norm}) that there exists a sufficiently small $\epsilon>0$ such that $H_iH_i^T+\epsilon \le N$. In particular,
$$0<\epsilon < N-1-(1-m)^2
%\cdot
\min_{i\in\cV}\{\sum_{j=1}^Nw_{ij}^2\}.$$
By Lemma \ref{lem_ergodic}, there exists a sufficiently large $k_2$ such that
\bee
\frac{1}{N+\epsilon} \le \alpha(k)\le \frac{1}{N-\epsilon}, \forall k>k_2.
\ene
Combining the above relations, we obtain that
$$
0\le \alpha(k)H_{s(k)}^TH_{s(k)}\le I, \forall k>k_2.
$$
As in part (b) of Lemma~\ref{lem_trans},  there exist sufficiently large $k_3>k_2, j_1>k_2$ and $\eta_1\in(0, 1)$  such that
 \bee
\twon{\widetilde{\Phi}(k,j)}\le \eta_1^{\lfloor\frac{k-j}{N}\rfloor}~\text{if}~ k-j>k_0~\text{and}~j>j_1. \label{trans1}
\ene

Similar to part (c) of Lemma~\ref{lem_trans},  \textcolor[rgb]{0,0,1}{there exists a positive $M_{j_1}$ such that }
\bee\label{bound_trans1}
\sum_{j=j_1+1}^k\twon{\widetilde{\Phi}(k,j)}<\textcolor[rgb]{0,0,1}{M_{j_1}}, \forall k>j_1.
\ene

\textcolor[rgb]{0,0,1}{For all $ k> j_1$}, we obtain that
\beq
\sum_{j=0}^k\twon{\widetilde{\Phi}(k,j)}&=&\sum_{j=0}^{j_1}\twon{\widetilde{\Phi}(k,j)}+\sum_{j=j_1+1}^k\twon{\widetilde{\Phi}(k,j)}\nonumber\\
&\le&\sum_{j=0}^{j_1}\twon{\widetilde{\Phi}(j_1,j)}+\sum_{j=j_1+1}^k\twon{\widetilde{\Phi}(k,j)}\nonumber\\
&<&\textcolor[rgb]{0,0,1}{j_1+1+M_{j_1}},
\enq
where we used the fact that $\twon{\widetilde{\Phi}(k,j)}\le 1$ for all $k\ge j$.
\end{proof}
{\em Proof of Theorem~\ref{thm_convergence1}}: (a)~By (\ref{pagerankvector}) and (\ref{leastsquare}), it is obvious that $y_{s(k)}=H_{s(k)}x$.  Let $e(k):=x(k)-x$ and $\widetilde{y}_{s(k)}:=\widehat{y}_{s(k)}-m/N$, it follows from (\ref{diffusion1}) that
\bee
e(k+1)=(I-\alpha(k) \cdot H_{s(k)}^TH_{s(k)})e(k)+\alpha(k)H_{s(k)}^T\widetilde{y}_{s(k)},\label{error1}
\ene
which can also be written as
\bee\label{error}
e(k+1)=\widetilde{\Phi}(k+1,0)e(0)+\sum_{j=0}^{k}\widetilde{\Phi}(k,j)\alpha(j)H_{s(j)}^T \widetilde{y}_{s(j)}.
\ene

By (\ref{trans1}), it is obvious that the first term of the right hand side converges to zero with probability one.

By Lemma~\ref{lem_ergodic}, it follows that $\lim_{k\rainfty}\alpha(k)=1/N$ and $\lim_{k\rainfty}\widetilde{y}_{s(k)}=0$ with probability one. Jointly with Corollary \ref{cor_bound}, it follows from Toeplitz's Lemma~\cite{ash2000pam} that the second term in the right hand side of (\ref{error}) converges to zero with probability one.

(b)~By the first part, we know that $\sup_{k\in\bN}\twon{e(k)}<\infty$. Together with the Dominated Convergence Theorem~\cite{ash2000pam}, it follows that
\bee
\lim_{k\rainfty}\twon{x(k)-x}_{L^p}=\left(\bE[\lim_{k\rainfty}\twon{x(k)-x}_p^p]\right)^{1/p}=0,
\ene
which completes the proof. \qed

\section*{Acknowledgement}
The authors would like to thank Ms. Dan Wang from HKUST for her help on the simulations, and thank the Associate Editor and anonymous reviewers for their very constructive comments, which contributed greatly to the improvement of the quality of this work.
\bibliographystyle{IEEEtrans}
\bibliography{mybibf}
\begin{IEEEbiography}
%[{\includegraphics[width=1in,height=1.25in,clip,keepaspectratio]{}}]
{Keyou You}  received the B.S. degree in Statistical Science from Sun Yat-sen University, Guangzhou, China, in 2007 and the Ph.D. degree in Electrical and Electronic Engineering from Nanyang Technological University (NTU), Singapore, in 2012.

From June 2011 to June 2012, he was with the Sensor Network Laboratory at NTU as a Research Fellow. Since July 2012, he has been with the Department of Automation, Tsinghua University, China as an Assistant Professor. He held visiting positions at Politecnico di Torino, Hong Kong University of Science and Technology, University of Melbourne and etc. His current research interests include networked control systems, parallel networked algorithms, and their applications.

He received the Guan Zhaozhi best paper award at the 29th Chinese Control Conference in 2010, and a CSC-IBM China Faculty Award in 2014. He was selected to the national ``1000-Youth Talent Program" of China in 2014.\end{IEEEbiography}

\begin{IEEEbiography}
%[{\includegraphics[width=1in,height=1.25in,clip,keepaspectratio]{}}]
{Roberto Tempo}   is currently a Director of Research of Systems and Computer Engineering at CNR-IEIIT, Politecnico di Torino, Italy. He has held visiting positions at Chinese Academy of Sciences in Beijing, Kyoto University, The University of Tokyo, University of Illinois at Urbana-Champaign, German Aerospace Research Organization in Oberpfaffenhofen and Columbia University in New York. His research activities are focused on the analysis and design of complex systems with uncertainty, and various applications within information technology. On these topics, he has published more than 180 research papers in international journals, books and conferences. He is also a co-author of the book ``Randomized Algorithms for Analysis and Control of Uncertain Systems, with Applications'', Springer-Verlag, London, published in two editions in 2005 and 2013.

Dr. Tempo is a Fellow of the IEEE and a Fellow of the IFAC. He is a recipient of the IEEE Control Systems Magazine Outstanding Paper Award, of the Automatica Outstanding Paper Prize Award, and of the Distinguished Member Award from the IEEE Control Systems Society. He is a Corresponding Member of the Academy of Sciences, Institute of Bologna, Italy, Class Engineering Sciences.

In 2010 Dr. Tempo was President of the IEEE Control Systems Society. He is currently serving as Editor-in-Chief of Automatica. He has been Editor for Technical Notes and Correspondence of the IEEE Transactions on Automatic Control in 2005-2009 and a Senior Editor of the same journal in 2011-2014. He is a member of the Advisory Board of Systems \& Control: Foundations \& Applications, Birkhauser. He was General Co-Chair for the IEEE Conference on Decision and Control, Florence, Italy, 2013 and Program Chair of the first joint IEEE Conference on Decision and Control and  European Control Conference, Seville, Spain, 2005.\end{IEEEbiography}
\begin{IEEEbiography}
%[{\includegraphics[width=1in,height=1.25in,clip,keepaspectratio]{}}]
{Li Qiu} received his Ph.D.\ degree in electrical engineering from the
University of Toronto in 1990. After briefly working in the Canadian Space Agency, the Fields Institute for Research in Mathematical Sciences (Waterloo),
and the Institute of Mathematics and its Applications (Minneapolis), he joined Hong Kong
University of Science and Technology in 1993, where he is now a
Professor of Electronic and Computer Engineering.

Prof.\ Qiu's research interests include system, control,
information theory, and mathematics for information technology, as well as their
applications in manufacturing industry and energy systems.
He is also interested in control education and coauthored an undergraduate textbook
``Introduction to Feedback Control'' which was published by Prentice-Hall in 2009.
This book has so far had its North American edition, International edition, and Indian edition. The Chinese Mainland edition is to appear soon. He served as an
associate editor of the {\em IEEE Transactions on Automatic Control} and an associate editor of {\em Automatica}. He was the general chair of the 7th Asian Control Conference, which was held in Hong Kong in 2009. He was a Distinguished Lecturer from 2007 to 2010 and is a member of the Board of Governors in 2012 of the IEEE Control Systems Society. He is the founding chairperson of the Hong Kong Automatic Control Association, serving the term 2014-2017. He is a Fellow of IEEE and a Fellow of IFAC.

\end{IEEEbiography}

\end{document}